\newcommand{\SINRTN}{\mathrm{SINR}_{\mathrm{}}}
\newcommand{\rmax}{r_{\mathrm{max}}}
\newcommand{\riso}{d_{\mathrm{iso}}}
\newcommand{\rEarth}{r_{\oplus}}
\newcommand{\isd}{d_{\mathrm{ISD}}}
\newcommand{\Prob}{{\mathbb{P}}}
\newcommand{\Ptn}{p_\mathrm{TN}}
\newcommand{\Gtn}{G_\mathrm{TN}}
\newcommand{\Htn}{H_\mathrm{TN}}
\newcommand{\Itn}{I_\mathrm{TN}}
\newcommand{\Nc}{N_\mathrm{c}}
\newcommand{\Nu}{N_\mathrm{u}}
\newcommand{\Pntn}{p_{\mathrm{NTN}}}
\newcommand{\Gntn}{G_\mathrm{NTN}}
\newcommand{\Hntn}{H_\mathrm{NTN}}
\newcommand{\Intn}{I_\mathrm{NTN}}
\newcommand{\IntnI}{I_{\mathrm{DL}}}
\newcommand{\IntnII}{I_{\mathrm{UL}}}
\newcommand{\Pc}{P_\mathrm{c}}
\newcommand{\pt}{p_\mathrm{t}}
\newcommand{\Rs}{R_\mathrm{0}}
\newcommand{\Rntn}{{\mathcal{R}}_n}
\newcommand{\Rntnsat}{{\mathscr{R}}_n}
\newcommand{\rntnsat}{{\mathcal{r}}_n}
\newcommand{\rs}{r_\mathrm{0}}
\newcommand{\rntn}{r_\mathrm{NTN}}
\newcommand{\rn}{r_n}
\newcommand{\Rn}{R_n}
\newcommand{\rTN}{r_\mathrm{TN}}
\newcommand{\Sc}{s_\mathrm{c}}
\newcommand{\sr}{s_\mathrm{r}}
\newcommand{\atn}{\alpha_{\mathrm{TN}}}
\newcommand{\antn}{\alpha_{\mathrm{NTN}}}
\newtheorem{theorem}{Theorem}
\newtheorem{lem}{Lemma}
\definecolor{orcidlogocol}{HTML}{A6CE39}
\tikzset{
  orcidlogo/.pic={
    \fill[orcidlogocol] svg{M256,128c0,70.7-57.3,128-128,128C57.3,256,0,198.7,0,128C0,57.3,57.3,0,128,0C198.7,0,256,57.3,256,128z};
    \fill[white] svg{M86.3,186.2H70.9V79.1h15.4v48.4V186.2z}
                 svg{M108.9,79.1h41.6c39.6,0,57,28.3,57,53.6c0,27.5-21.5,53.6-56.8,53.6h-41.8V79.1z M124.3,172.4h24.5c34.9,0,42.9-26.5,42.9-39.7c0-21.5-13.7-39.7-43.7-39.7h-23.7V172.4z}
                 svg{M88.7,56.8c0,5.5-4.5,10.1-10.1,10.1c-5.6,0-10.1-4.6-10.1-10.1c0-5.6,4.5-10.1,10.1-10.1C84.2,46.7,88.7,51.3,88.7,56.8z};
  }
}
\newcommand\orcidicon[1]{\href{https://orcid.org/#1}{\mbox{\scalerel*{
\begin{tikzpicture}[yscale=-1,transform shape]
\pic{orcidlogo};
\end{tikzpicture}
}{|}}}}
\title{Co-existence of Terrestrial and \\
Non-Terrestrial Networks in S-band}
\author{\IEEEauthorblockN{Niloofar Okati \orcidicon{0000-0002-8074-5146}, Andre Noll Barreto, Luis Uzeda Garcia \orcidicon{0000-0002-0947-6852}, and Jeroen Wigard}%
% \thanks{Manuscript received June 2, 2021; revised October 21, 2021 and December 22, 2021; accepted January 3, 2022.
% The work of authors was supported by a Nokia University Donation.
% The associate editor coordinating the review of this paper and approving it for publication was N.~Lee. {\em (Corresponding author: Niloofar Okati.)}}
% \thanks{The authors are with Unit of Electrical Engineering, Faculty of Information Technology and Communication Sciences, Tampere University, FI-33720 Tampere, Finland (e-mail: niloofar.okati@tuni.fi; taneli.riihonen@tuni.fi).}
% \thanks{}%
}
\begin{document}
% Define Acronyms
\newcommand*\FirstLetter[2]{#1}
\renewenvironment{table}%
  {\renewcommand\familydefault\sfdefault
   \@float{table}}
  {\end@float}
\makeatother

\setlength{\parskip}{0pt}

% The paper headers
% \markboth{IEEE Transactions on Communications}%
% {Okati and riihonen: Nonhomogeneous Stochastic Geometry Analysis of Massive LEO Communication Constellations}

\maketitle

\begin{abstract}
Co-existence of terrestrial and non-terrestrial networks (NTN) is foreseen as an important component to fulfill the global coverage promised for sixth-generation (6G) of cellular networks. Due to ever rising spectrum demand, using dedicated frequency bands for terrestrial network (TN) and NTN may not be feasible. As a result, certain S-band frequency bands allocated by radio regulations to NTN networks are overlapping with those already utilized by cellular TN, leading to significant performance degradation due to the potential co-channel interference. Early simulation-based studies on different co-existence scenarios failed to offer a comprehensive and insightful understanding of these networks’ overall performance. Besides, the complexity of a brute force performance evaluation increases exponentially with the number of nodes and their possible combinations in the network. In this paper, we utilize stochastic geometry to analytically derive the performance of TN-NTN integrated networks in terms of the probability of coverage and average achievable data rate for two co-existence scenarios. From the numerical results, it can be observed that, depending on the network parameters, TN and NTN users’ distributions, and traffic load, one co-existence case may outperform the other, resulting in optimal performance of the integrated network. The analytical results presented herein pave the way for designing state-of-the-art methods for spectrum sharing between TN and NTN and optimizing the integrated network performance.
\end{abstract}

\begin{IEEEkeywords}
Non-terrestrial networks (NTN), Low Earth orbit (LEO) Internet constellations,  interference, co-existence, stochastic geometry, coverage probability, data rate, spectrum sharing
\end{IEEEkeywords}

\IEEEpeerreviewmaketitle
%-----------------------------------------------------------------------
\section{Introduction}
\label{sec:intro}
%{\color{red}General info on the importance of TN-NTN coexistence }
\ac{NTN} is foreseen as an integral part of sixth-generation (6G) cellular networks to fulfill the requirements on expanding coverage and connectivity in remote and underserved areas \cite{LEO2021Xie}. \ac{NTN} encompass different types of networks communicating through the sky (air or space), including satellite-based networks, \acp{HAPS}, and \acp{UAV}. During the recent years, due to the significant reduction in the launch costs of satellites and increased demand for global broadband connectivity, \ac{LEO} satellites outpaced other types of \ac{NTN} in commercialization, leading to several initiatives such as Starlink, OneWeb, AST SpaceMobile and Project Kuiper \cite{lin2021}.

The integration of \ac{NTN} technology into \ac{TN} for the next-generation communication systems, such as 6G, holds great potential for revolutionizing global connectivity \cite{azari2022}, by complementing terrestrial infrastructure and extending connectivity to remote areas, such as rural regions, maritime environments, and disaster-stricken locations. However, the successful deployment and operation of \ac{NTN} systems face several challenges, one of which is the scarcity of spectrum resources. The limited availability of suitable frequency bands may impose that \ac{NTN} and \ac{TN} systems operate in shared frequency bands, which will inevitably cause interference between both systems. To address this limitation, it becomes imperative to explore mechanisms for sharing and co-existence between \ac{NTN} and \ac{TN}, while ensuring efficient spectrum utilization and minimizing interference.

 Due to the massive number of nodes and the large number of possible combinations, Monte Carlo simulation-based performance evaluation of the TN-NTN integrated network is computationally demanding. In this paper, to evaluate the interference effect of an \ac{NTN} network on the performance of a \ac{TN} network, we utilized stochastic geometry as a powerful mathematical framework which has a rich history in modeling and analyzing of wireless networks \cite{baccelli2009stochastic}. Stochastic geometry enables obtaining analytical tractable derivations on the performance of the integrated network using the spatial distribution of network elements, e.g., \acp{BS} and \acp{UE}. 
 The analysis provides several insights into the impact of different network parameters, such as the density of TN \acp{BS}, NTN \acp{BS}' altitudes, \ac{ISD}, the number of UEs, and the minimum separation distance between the users of each network, on the network's performance. Obviously, such results will pave the way for design of different co-existence mechanisms and spectrum sharing policies required to enable seamless operation of \ac{NTN} and \ac{TN} systems.
%\vspace{1 cm}

%{\color{red}Stating the problem we are going to address}
%\vspace{1 cm}

%{\color{red}Summarizing the methodology and results obtained in this paper}
%\vspace{1 cm}

\subsection{Related Works}
The integration of \ac{NTN} and \ac{TN} is proposed and investigated in many prior arts in the literature. In \cite{lopez2022},
the performance of \ac{5G} \ac{TN} is compared with that of a \ac{LEO} satellite network using an experimental setup in a suburban environment, in terms of latency and throughput. It was demonstrated that multi-connectivity between both networks can provide coverage seamlessly for low-latency and high-reliability requiring services.
%The coexistence of \ac{TN} and \ac{NTN} in adjacent bands was investigated in \cite{sormunen2022}.
Co-existence and spectrum sharing studies have received significant attention in the \ac{ITU} agenda. For instance, a simulation tool that analyses the aggregate interference between \ac{5G} \ac{TN} and other networks following ITU recommendations was presented in \cite{souza2017}. It was used, for instance, in \cite{queiroz2019} to investigate co-existence of 5G and \ac{HAPS}. 

Several methods for spectrum sharing between \ac{TN} and \ac{LEO} satellites have been suggested in the literature. In  \cite{smith2021}, a methodology  using \ac{ORAN} was proposed, in which interference is reported by the ORAN terrestrial networks to the satellite service provider. In \cite{lee2021}, a simulation-based study for spectrum allocation between TN and NTN was developed to mitigate the co-channel interference to the NTN link. An experimental-based study on spectrum sharing between TN and a LEO satellite network was presented in \cite{Kokkinen2023Proof}, targeting as minimizing the interference from satellites at the TN users.  Cognitive radio approaches for co-existence of satellites and TN cellular networks over the same frequency bands have been studied in \cite{Resource2015Lagunas,Genetic2016Icolari}, targeting at better exploitation of frequencies that are not used by the terrestrial network.  
An overview of the integration of both \ac{GEO} and \ac{LEO} satellites with 5G systems, focusing on \ac{eMBB} and \ac{NBIOT} was presented in \cite{charbit2021}. 
%A thorough survey of \ac{NTN} in 5G and 6G can also be found in \cite{azari2022}.
%{\color{red}TN-NTN coexistence benefits, scenarios, issues}
%\vspace{1 cm}

%{\color{red}Literature on the integrated TN-NTN performance}
%\vspace{1 cm}

Stochastic geometry is a mathematical tool that finds extensive application across various scientific disciplines including wireless communication. Its utilization in the analysis of wireless networks dates back to 1961 to investigate the connectivity of large wireless networks {\color{black}with huge number of nodes} \cite{Gilber1961}.  
Stochastic geometry is utilized for system level performance analysis of large wireless networks to avoid massive time consuming Monte Carlo simulations. Using stochastic geometry, the performance of the network is obtained by taking the averages over different realization of nodes in a network \cite{baccelli2009stochastic}. Averages can be computed over a large number of nodes' locations or several network realizations, for different metrics of interest, e.g., the probability of coverage and the average achievable rate. 
More particularly, the nodes' locations are modeled as a point process which captures the characteristics of their distribution, e.g., their density and the relative distances between them. 

Stochastic geometry has found extensive application in the analysis of planar two-dimensional networks including heterogeneous \cite{Cao2013Optimal,Dhillon2012modeling,Dhillon2013LoadAware,heterogeneousSurvey,heterogen2016Wei}, cognitive \cite{Lee2012Interference,Ekram2015Congnitive}, ad hoc \cite{Hasan2007adhoc,Hanter2008adhoc,ganti2009adhoc,Baccelli2009adhoc}, and vehicular \cite{Chetlur2020vehicular,jarajaj2021vehicular} networks. A comprehensive literature survey on utilization of stochastic geometry for modeling multi-tier and cognitive cellular networks has been provided in \cite{heterogeneousSurvey}. A taxonomy is founded on three key criteria: the type of network under consideration, the point process employed for modeling, and the network performance characterization techniques. The coverage and rate are tractably derived through modeling the \acp{BS} as a homogeneous PPP in \cite{Tractable2011Andrews} which is known as one of the seminal works on the utilization of stochastic geometry for performance analysis of a terrestrial cellular. The model shows the optimistic results of the conventional grid models while the stochastic model provides a lower band on the network performance. 

The utilization of stochastic geometry for analyzing three-dimensional networks has been also compelling \cite{PAN2015Modeling,Chetlur2017DownlinkUAV,Wang2018UAV}. The tool was used to study different scenarios such as \acp{BS} being located on both ground and rooftops in highly crowded urban areas, UAV swarming, and massive satellite networks. Two spectrum sharing techniques, i.e., underlay and overlay, to share the spectrum between UAV-to-UAV and \ac{BS}-to-UAV transceivers are analyzed in \cite{Azari2020UAV} by modeling both UAVs and \acp{BS} as homogeneous PPPs. The impact of interference from TN \acp{BS} on the primary users of a multi-beam geostationary satellite when both networks share the same spectrum resources is investigated in \cite{kolawole2017Performance}. Using the tools from stochastic geometry, authors analyzed two performance metrics of outage probability and area spectral efficiency under three secondary transmission schemes by modeling the satellite users and \acp{BS} as two independent point processes.

The application of stochastic geometry has been also extended for performance analysis of massive low Earth orbit 
constellations, by modeling the satellites as a \ac{BPP} \cite{okati2020downlink,okati2020PIMRC,talgat2020stochastic,Alouini2020NearestNeighbor} or a nonhomogeneous Poisson point process (NPPP) \cite{okatiNonhomo,okatiVTC2021} on a spherical shell which facilitates the utilization of the tools from stochastic geometry. In fact, a point process represents different realizations of satellites over the time caused by the motion of satellites on the orbits. Verification of the exact analytical derivations from the stochastic modeling with the actual simulated constellations represents a fair accuracy of such modeling. In \cite{okatiWCNC2022}, the coverage and rate of a noise-limited scenario is studied using stochastic geometry when the user is associated with a satellite which provides the highest signal-to-noise (SNR). Such association, compared to the simplistic case of associating the user with the nearest satellites, includes the effect of shadowing caused by the user's surrounding objects. The performance of multi-altitude LEO constellations when satellites are distributed on several orbital shells is studied in \cite{talgat2020stochastic, okatiLetter1}.

%{\color{red}Literature on stochastic geometry for TN, NTN, and integrated networks}
In this paper, we study the coverage probability and average data rate of a primary TN network which shares the same spectrum resources with a secondary LEO network in S-band. We utilize the tools from stochastic geometry to investigate two scenarios in which the TN \ac{DL} transmission is subject to interference from either satellites or NTN UEs transmitting in \ac{UL}. Using the tools from stochastic geometry facilitates derivations of analytical exact expressions for the performance of the integrated TN-NTN network. The results herein provide direct insights on several parameters associated with the integrated network such as the satellite altitude, the TN user location, transmission power of \acp{BS} and satellites, traffic load, and ISD. 

\subsection{Contributions and Paper Organization}
In this paper, we consider a group of TN \acp{BS} distributed over a finite region which is also under the coverage of a beam from a LEO satellite. The TN and LEO networks share the same frequency channels in S-band. The UEs, located further away from the TN network's coverage area, lack access to the service provided by the TN network and, thus, are served by the LEO network. In order to analyze the performance of the integrated network and utilize the tools from stochastic geometry, we model the TN \acp{BS} and the NTN UEs as a \ac{BPP}. The choice of \ac{BPP} is justifiable as i) \acp{BS} and/or NTN users are distributed over a finite region, and ii) the performance is location dependent \cite{afshang2017Fundementals}. 

% {\color{red} Stochastic geometry-base modeling: Justification of distribution of TN Users, NTN Users,TN \acp{BS}}
% \vspace{1 cm}

%{\color{red} Provided scientific contributions as bullet points}
%\vspace{1 cm}
%------------------------
\begin{figure*}
         \begin{subfigure}[b]{0.48\textwidth}
         \centering
         \includegraphics[trim = 5mm 0mm 5mm 10mm, clip,width=\textwidth]{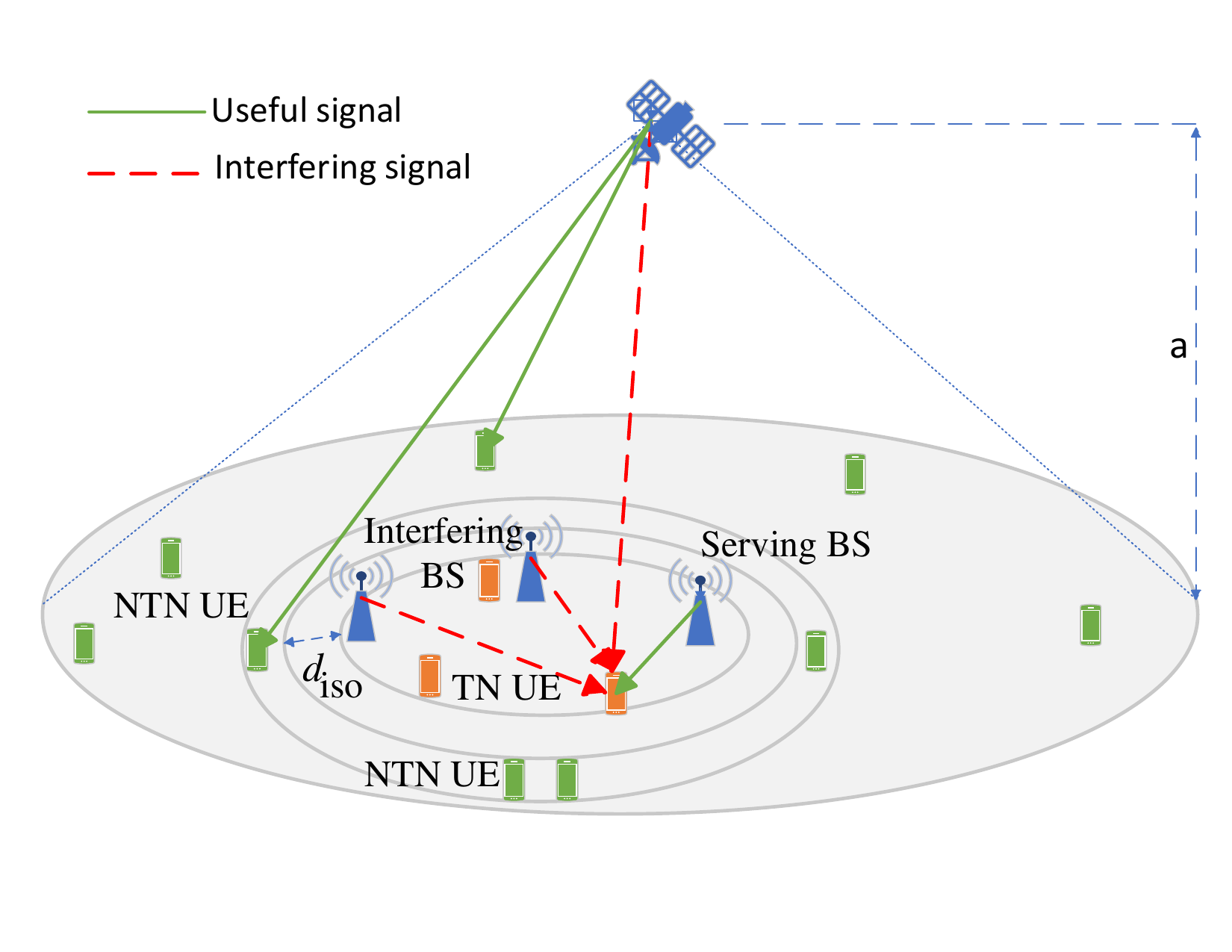}
         \caption{NTN DL interferes with TN DL.}
         \label{fig:sys_scenario3}
     \end{subfigure}
     \hfill
     \begin{subfigure}[b]{0.48\textwidth}
         \centering
         \includegraphics[trim = 5mm 0mm 5mm 10mm, clip,width=\textwidth]{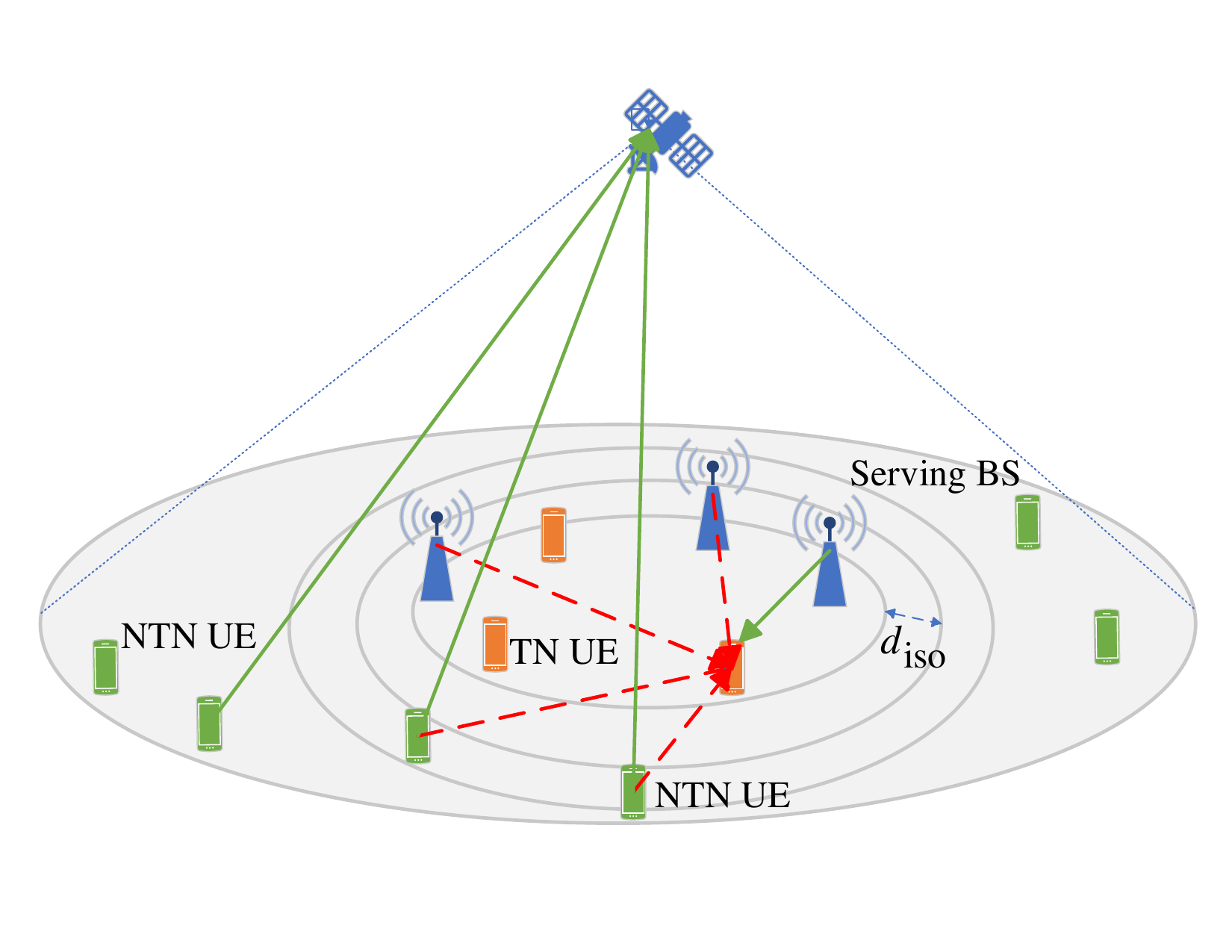}
         \caption{NTN UL interferes with TN DL.}
         \label{fig:sys_scenario5}
     \end{subfigure}
     \hfill 
    \caption{Co-existence scenarios from \cite{TR38.863} where NTN DL (a) or NTN UL (b) interferes with TN DL.}
   \label{fig:system model}
\end{figure*}
%------------------------
The main scientific contributions of this paper are as follows:
\begin{itemize}
	
\item 
Using the tools from stochastic geometry, we derive exact analytical expressions for the probability of coverage and average achievable data rate of a TN \ac{DL} network which shares the same frequency channel with NTN network.

\item 
The performance analysis is provided for two co-existence cases which will be referred to as Case~I and Case~II in what follows. Case~I corresponds to the case when TN network shares the same frequency channel with NTN in \ac{DL} direction, i.e., the source of inter-network interference is from the satellite broadcasting the signal to the area where TN network is located. In Case~II, the TN network shares the same frequency band with NTN in \ac{UL} direction, i.e., the inter-network source of interference is from the NTN UEs transmitting to their serving satellite in close proximity of the TN network. {\color{black}Note that according to \cite{mediatek}, the performance degradation in case of co-existence of TN UL with NTN network is insignificant.}

\item 
We validate all our theoretical derivations with Monte Carlo simulations. Throughout the numerical results, we show how each of the co-existence cases mentioned in the previous bullet point may affect the performance of TN network. The performance is illustrated in terms of several parameters, e.g., TN user's location in the cell, type of geographical environments, satellite's altitude, TN \ac{BS}'s transmit power, the number of NTN users, TN traffic load, and the isolation distance between TN and NTN UEs.
 
\item 
The numerical results suggest several insightful guidelines on spectrum sharing between TN and NTN networks. In other words, the results provide the answer to the following question: {\em{Given the network deployment parameters, user's distributions, and traffic load, which case of co-existence leads to a superior performance for the TN network?}} 
\end{itemize}

% {\color{red} Some conclusions from the numerical results}
% \vspace{1 cm}
For the propagation model, we consider Nakagami-$m$ fading with integer $m$ which not only preserves the analytical tractability but also covers a wide range of fading cases by varying the parameter $m$. As increasing $m$ corresponds to higher probability of line-of-sight, we opt a higher $m$ for satellite to ground channels while a smaller value is used for terrestrial channels. From the numerical results, we will see that there is less performance degradation in urban regions when TN DL shares the same band with NTN DL. In the other scenario when NTN UL uses the same band as TN DL, the performance varies significantly with the TN user's location w.r.t. the cell center, i.e., more degradation occurs for edge users. 

%{\color{red} Organization of the paper}
%\vspace{1 cm}
The remainder of this paper is organized as follows. Section~\ref{sec:sysmod} describes the system model as well as the two TN-NTN co-existence scenarios studied in this paper. Then in Section~\ref{sec:cov}, we derive the exact analytical expressions for the coverage performance and average data rate for each co-existence case. The verification of the analytical results as well as studying the effect of several network parameters on the performance of the integrated network is provided in Section~\ref{sec:Numerical Results}. Finally, the paper is concluded in Section~\ref{sec:Conclusion}.
%-----------------------------------------------------------------------
%------------------------
\begin{figure} [h!]
         \centering
         \includegraphics[trim = 10mm 5mm 10mm 5mm, clip,width=\textwidth]{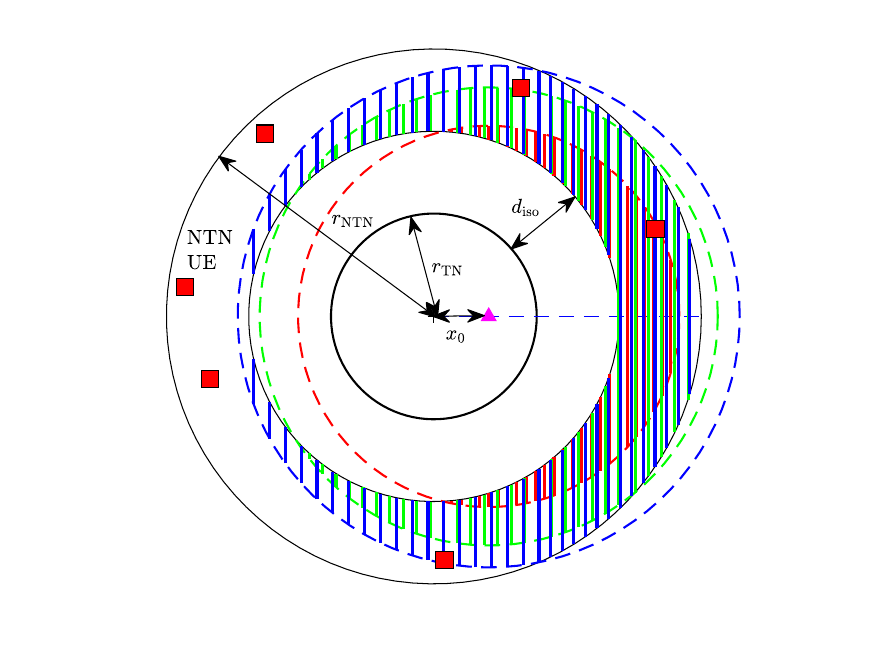}
         \caption{Geometry of the system model used for derivation of \eqref{eq:CDF_Rn_ntn} in co-existence Case~II. The shaded area is the area obtained from the intersection of the disc centered at TN UE with radius $\Rntn$ and the outer annulus defining the outer border for NTN UEs.}
         \label{fig:Integral bounderies}
\end{figure}
%------------------------
\section{System Model}
\label{sec:sysmod}
The \ac{TN}-\ac{NTN} co-existence cases studied in this paper are shown in Fig.~\ref{fig:system model}. In these cases, a \ac{LEO} satellite orbiting at altitude $a$ serves $\Nu$ \ac{NTN} users in \ac{DL} (c.f. Fig.~\ref{fig:system model}(a)) and \ac{UL} (c.f. Fig.~ \ref{fig:system model}(b)). Assuming that the \ac{TN} network is always preferable over \ac{NTN} due to its superior performance and reliability, \ac{NTN} 
\acp{UE} are assumed to be distant from the TN network's edge at {\color{black}least by a distance denoted by $\riso$.} The \ac{TN} \acp{BS} are distributed as clusters with radius $\rTN$, within the beam coverage area of the satellite, with each cluster having $\Nc$ \acp{BS} and the minimum distance between any two \acp{BS}, known as inter-site distance, is denoted by $\isd$, which is a function of the physical environment, e.g., rural, urban, etc. 
%Note that in Fig.~\ref{fig:system model} only one spotbeam of \ac{LEO} satellite is shown for simplicity. However, each satellite might have several spotbeams, within each there could be multiple TN clusters. 
\ac{TN} \acp{UE} are uniformly distributed within each \ac{TN} cluster.

In this paper, the serving \ac{TN} \ac{BS} is assumed to be the nearest \ac{BS} to the \ac{TN} user and serves the \ac{TN} \ac{UE} in the DL direction. %Obviously, \ac{TN} \ac{UE} belongs to the same cluster as the server.
We assume that the same frequency band is shared among all \ac{TN} \acp{BS} %being on the same cluster with the \ac{TN} \ac{UE}
and with the \ac{NTN} that provides coverage over the \ac{TN} cluster. This causes co-channel interference at the \ac{TN} \ac{UE}'s reception. In Fig.\ref{fig:system model}, solid lines represent the useful signals while dashed lines depict the interfering signals.

% The channel model for communication links between the satellite and the ground users is assumed to follow \ac{SR} distribution with its {\color{red}\ac{CDF}} given as follows.
% \begin{align}
%     \label{eq:CDF_Hntn}
%     &F_{\Hntn}(h)=\frac{2\bsr\msr}{2\bsr\msr+\omegasr}\times\\ \nonumber
%     &\sum_{n=0}^\infty\frac{(\msr)_n}{n!\Gamma(n+1)}\left(\frac{\omegasr}{2\bsr\msr+\omegasr}\right)^n\left(\gamma(n+1,\frac{h}{2\bsr})\right),
% \end{align}
% where $\bsr$, $\msr$, and $\omegasr$ are the SR fading parameters, $\Gamma(\cdot)$ denotes the Gamma function, $(\msr)_n=\frac{\Gamma(\msr+n)}{\Gamma(n)}$ is the Pochhammer symbol, and $\gamma(a,x)=\int_0^x y^{a-1}e^{-y}dy$ is the lower incomplete gamma function.
\begin{table*}
\renewcommand \caption [2][]{}
\centering
\begin{minipage}{0.9\textwidth}
\setcounter{equation}{2}
\begin{align}
    \label{eq:CDF_R0}
    &F_{R_0}\left(r_0\right)\triangleq\Prob\left(R_0 \leq r_\text{0}\right) = 1 - \left(1-F_{R}\left(\rs\right)\right)^{\Nc}\\\nonumber
   %&=1-\exp(-\Lambda(A))\\
   %&=1-\exp\left(-\int_A\delta(\phisat)dA\right),
&=\left\{
\begin{array}{ll}
1-\left(1-\left(\frac{\rs}{\rTN}\right)^2\right)^{\Nc}, & 0\leq\rs\leq \rTN-x_0\\
 1-\left(1-\frac{1}{\pi}\left(\left(\frac{\rs}{\rTN}\right)^2\left(\theta-\frac{1}{2}\sin(2\theta)\right)+\left(\phi-\frac{1}{2}\sin(2\phi)\right)\right)\right)^{\Nc}, & \rTN-x_0<\rs \leq \rTN+x_0,\\
0,&\text{otherwise}.
%\frac{r\ln\left(\frac{\rEarth+\amax}{r-\rEarth}\right)-4\rEarth}{2\rEarth(\amax-\amin)}, &\amin<\amax<r-2\rEarth<r.\\
\end{array} \right.
\end{align}
\end{minipage}
\end{table*}

The channel model for serving link follows a Nakagami-$m$ fading model which allows to account for a wide range of multi-path fading conditions by varying the parameter $m$, while preserving the analytical tractability. Following the same approach as in \cite{okati2020downlink,okatiNonhomo}, we do not need to limit the fading distribution of the interfering channels to any specific one for the analytical derivations, since it has no effect on the tractability of our analysis. Obviously, resorting to some specific channel models is required to generate the numerical results. As the $m$ parameter represents the ratio between the received power of line-of-sight and non-line-of-sight components, a larger value should be opted for satellite to ground channel compared to terrestrial links with less line-of-sight probability. The channel gains for TN and NTN links are denoted by $\Htn$ and $\Hntn$, respectively.

The antennas at the satellites and \ac{TN} \acp{BS} are directional and the beamforming patterns are chosen according to specifications given in \cite{TR38.863}. The antennas can radiate multiple beams towards the ground. The \ac{TN} and \ac{NTN} users are equipped with omni-directional antenna elements. Assuming perfect beamforming, the overall maximum antenna gain will be expressed as $\Gntn=G_{\mathrm{sat}}G_{\mathrm{u}}$ and $\Gtn=G_{\mathrm{BS}}G_{\mathrm{u}}$, where $G_{\mathrm{sat}}$, $G_{\mathrm{BS}}$, and $G_{\mathrm{u}}$ are the maximum antenna gain of satellites, the \ac{TN} \acp{BS}, and the user, respectively.

We study the \ac{TN}-\ac{NTN} co-existence under {\color{black}{two}} different scenarios as described in the following subsections. In these scenarios, we study the performance degradation caused by spectrum sharing between \ac{TN} and \ac{NTN} networks at the \ac{TN} \ac{UE} during \ac{NTN} network transmission in DL and UL. Based on the described scenarios, the \ac{SINR} at the \ac{TN} \ac{UE} can be expressed as
\setcounter{equation}{0}
\begin{align}
\label{eq:SINR}
\SINRTN = \frac{\Ptn \Gtn \Htn R_0^{-\atn}}{\Itn+\Intn+\sigma^2},
\end{align}
where constant $\sigma^2$ is the power of additive thermal noise, the parameter $\atn$ is a path loss exponent for TN link, $\Ptn$ is the transmit power of TN BS, $\Itn = \sum_{n=1}^{\Nc-1} \Ptn \Gtn \Htn R_n^{-\atn}$
is the cumulative interference power from all other \acp{BS} in the same cluster as the \ac{TN} user, and $\Intn\in\{\IntnI,\IntnII\}$ is the interference power received from the \ac{NTN} network in DL and UL directions denoted by $\IntnI$ and $\IntnII$, respectively. The distances from the TN user to the serving TN BS and other interfering BSs are denoted by random variables $R_0$ and $R_n$, \mbox{$n=1,2,\ldots,\Nc-1$}, respectively. In the following subsections, we elaborate on two TN-NTN co-existence cases studied in this paper.

\subsection{Co-existence Case~I: NTN DL as the aggressor and TN DL as the victim}
\label{subsec:case1}
Co-existence Case~I corresponds to Scenario~3 given in \cite{TR38.863}, in which both \ac{TN} and \ac{NTN} operate in DL over the same frequency band (c.f. Fig.\ref{fig:sys_scenario3}). The aggressor in this case is a \ac{LEO} satellite at altitude $a$ and the victim is a \ac{TN} user, which receives the useful signal from a \ac{BS}. The source of interference at the \ac{TN} \ac{UE} will be from other \acp{BS} located in the same cluster with the \ac{TN} \ac{UE} and the serving \ac{BS}, as well as the \ac{LEO} satellite that provides coverage over the \ac{TN} cluster. 
%{\color{black} \ac{TN} clusters are randomly placed in the satellite's beam. \acp{BS} are also placed randomly within each cluster.}

We assume that \acp{BS} are distributed according to a \ac{BPP} on a circle with radius $\rTN$, and that the victim \ac{TN} \ac{UE}, without loss of generality, is located on point $(x_0,0,0)$. Now let us define a circle centered at \ac{UE}'s location with radius equal to the distance between the \ac{UE} and any \ac{BS} in the \ac{TN} cluster, denoted by $R_n$. The area of this circle is given by
\begin{align}
    \label{eq:A_user}
&{\cal A}(r_n)=\\\nonumber
&\hspace{-1 mm}\left\{
\begin{array}{ll}
{\pi r_n^2}, & 0\leq r_n\leq \rTN-x_0\\
 r_n^2\left(\theta-\frac{1}{2}\sin(2\theta)\right)&\\
 +\rTN^2\left(\phi-\frac{1}{2}\sin(2\phi)\right), & \rTN-x_0<r_n < \rTN+x_0,\\
0,&\text{otherwise}.
%\frac{r\ln\left(\frac{\rEarth+\amax}{r-\rEarth}\right)-4\rEarth}{2\rEarth(\amax-\amin)}, &\amin<\amax<r-2\rEarth<r.\\
\end{array} \right.
\end{align}
where $\theta=\arccos\left(\frac{r_n^2+x_0^2-\rTN^2}{2x_0 r_n}\right)$ and $\phi=\arccos\left(\frac{-\rn^2+x_0^2+\rTN^2}{2x_0\rTN}\right)$. Thus, for \ac{BPP}-distributed \acp{BS}, the \ac{CDF} of the distance between the \ac{TN} \ac{UE} and any arbitrary \ac{BS} is $F_{\Rn}\left(\rn\right)=\frac{{\cal A}(\rn)}{\pi\rTN^2}$.
Due to the channel assignment by which the serving \ac{BS} is the nearest one among all the $\Nc$ i.i.d. \acp{BS}, the CDF of $R_0$ for a cluster of \ac{BPP} distributed gNBs can be written as \eqref{eq:CDF_R0}.

When conditioned on the serving distance such that $\Rs=\rs$, the CDF of $R_n|_{R_0=r_0}$ is obtained by conditioning $\Rn$ on $\Rs$ as follows:
\setcounter{equation}{3}
\begin{align}
\label{eq:Rn_R0_CDF}
F_{R_n|\Rs}(r_n|\rs)&\triangleq\Prob\left(R_n<r_n|R_0=r_0\right)\\\nonumber
&=\frac{\Prob\left(r_0 \leq R_n \leq r_n\right)}{\Prob(R_n>r_0)} = \frac{F_{R_n}(r_n)-F_{R_n}(r_0)}{1-F_{R_n}(r_0)}.
\end{align}
Taking the derivative w.r.t. $r_n$ gives the probability density function (PDF) of $R_n$ as
\begin{align}
\label{eq:Rn_R0_PDF}
f_{R_n|\Rs}(r_n|\rs)= \frac{f_{R_n}(r_n)}{1-F_{R}(r_0)}.
\end{align}
The interference power received from a satellite transmitting towards the ground users, i.e., $\Intn$ in \eqref{eq:SINR}, can be obtained as $\IntnI = \Pntn \Gntn \Hntn \Rntnsat^{-\antn}$,  where $\Rntnsat$ is the distance between the LEO satellite and the ground users, and $\antn$ is a path loss exponent for NTN link. 
%{\color{black} For a satellite at the zenith of NTN cell $\dntn$ can be approximated as $\dntn\approx a$ assuming that $\rTN<<a$.}

% \begin{table*}[!t]

% \caption{Summary of mathematical notation}
% \label{table:math notations}
% \begin{center}
% \begin{tabular}{c|cc}
% \hline
% \hline
% \textbf{Notation}&\textbf{Description}\\
% \hline
% $\rEarth; \rmin; \rmax $&Earth radius ($6371$ km); {Constellation altitude}; Maximum possible distance between a user and a visible satellite\\
% \hline
% $\Rs$; $R_n$ &Serving distance; Distance to the $n^{\mathrm{th}}$ satellite\\
% \hline
% $N; \NI; K$& Constellation size; The number of interfering satellites; The number of frequency channels\\
% \hline
% {\color{black}$\Gu;\Gn;\Gt$}&{\color{black}Antenna gain of the user; Antenna gain of the $n^{\mathrm{th}}$ satellite; Overall antenna gain}\\
% \hline
% $H_0; H_n$&Channel fading gain of the serving link; Channel fading gain of the $n^{\mathrm{th}}$ link\\
% \hline

% $\Xs; \Xn$& Shadowing component of the serving link; Shadowing component of the $n^{\mathrm{th}}$ link\\
% \hline
%  $\atn$&Path loss exponent\\
%  \hline
% $\pt $& Transmission power satellites\\
% \hline
%  $\sigma^2$&Additive noise power\\
% \hline
%  $T$&SINR threshold for coverage probability\\
% \hline
%  $\Pc; \bar{C}$&Coverage probability; Average achievable rate\\
% \hline
% \hline
% \end{tabular}
% \end{center}
% \end{table*}

%-----------------------------------------------------------------------
\subsection{Co-existence Case~II: NTN UL as the aggressor and TN DL as the victim}

Co-existence Case~II corresponds to Scenario~5 given in \cite{TR38.863} in which \ac{TN} DL and \ac{NTN} UL send the data over the same frequency band (c.f. Fig.\ref{fig:sys_scenario5}). The aggressors in this case are the \ac{NTN} UEs and the victim, similarly to Case~I, is the \ac{TN} \ac{UE}, which receives the useful signal from its nearest TN \ac{BS}. Thus, the source of interference at the \ac{TN} \ac{UE} will be from other \ac{NTN} UEs, located at the edge of the \ac{TN} \ac{UE} cluster, and other \acp{BS} sharing the same cluster with the serving \ac{BS}.  Therefore, the serving distance distribution and TN interfering distances distribution will be similar to Case~I as given in \eqref{eq:CDF_R0} and \eqref{eq:Rn_R0_CDF}, respectively. The NTN interference power can be written as $\IntnII = \sum_{n=1}^{\Nu} \Pntn \Gntn \Hntn \Rntn^{-\atn}$.

\begin{table*}
\renewcommand \caption [2][]{}
\centering
\begin{minipage}{0.9\textwidth}
\begin{align}
    \label{eq:CDF_Rn_ntn}
     &F_{\Rntn}(\mathcal{r}_n)=\frac{1}{\pi(\rntn^2-(\rTN+\riso)^2)}\times\\\nonumber
   %&=1-\exp(-\Lambda(A))\\
   %&=1-\exp\left(-\int_A\delta(\phisat)dA\right),
   %F_{,}\left(\right)
&\left\{
\begin{array}{ll}
0, & 0\leq\Rntn\leq w-x_0,\\
&\\
F_{w,0}\left(\frac{w^2-{\mathcal{r}_n}^2+x_0^2}{2x_0}\right)-F_{{\mathcal{r}_n},x_0}\left(\frac{w^2-{\mathcal{r}_n}^2+x_0^2}{2x_0}\right)+F_{{\mathcal{r}_n},x_0}\left(x_0+{\mathcal{r}_n}\right)-F_{w,0}\left(w\right), &w-x_0<{\mathcal{r}_n} \leq \rntn-x_0,\\
&\\
F_{w,0}\left(\frac{w^2-{\mathcal{r}_n}^2+x_0^2}{2x_0}\right)-F_{{\mathcal{r}_n},x_0}\left(\frac{w^2-{\mathcal{r}_n}^2+x_0^2}{2x_0}\right)+F_{{\mathcal{r}_n},x_0}\left(\frac{\rntn^2-{\mathcal{r}_n}^2+x_0^2}{2x_0}\right)+&\rntn-x_0<{\mathcal{r}_n} \leq w+x_0,\\
F_{\rntn,0}\left(\rntn\right)-F_{w,0}\left(w\right)-F_{\rntn,0}\left(\frac{\rntn^2-{\mathcal{r}_n}^2+x_0^2}{2x_0}\right), &\\
&\\
\pi\left({\mathcal{r}_n}^2-w^2\right)-F_{\rntn,0}\left(\rntn\right)-F_{{\mathcal{r}_n},x_0}\left(x_0+{\mathcal{r}_n}\right)+ &w+x_0\leq{\mathcal{r}_n}\leq \rntn+x_0,\\
F_{{\mathcal{r}_n},x_0}\left(\frac{\rntn^2-{\mathcal{r}_n}^2+x_0^2}{2x_0}\right)+F_{\rntn,0}\left(\frac{\rntn^2-{\mathcal{r}_n}^2+x_0^2}{2x_0}\right), &\\
&\\
1, & \text{otherwise}.
%\frac{r\ln\left(\frac{\rEarth+\amax}{r-\rEarth}\right)-4\rEarth}{2\rEarth(\amax-\amin)}, &\amin<\amax<r-2\rEarth<r.\\
\end{array} \right.
\end{align}
\end{minipage}
\end{table*}

To obtain the distribution of the distances between the TN UE and NTN UEs, we assume that NTN UEs are distributed according to a \ac{BPP} on the annulus between two cocentric circles with radii $\rTN+\riso$ and $\rntn$ where $\rTN+\riso<\rntn$ and $\riso\geq 0$, as depicted in Fig.~\ref{fig:Integral bounderies}. Obviously, in this case, the interfering distances are independent of the serving distance. Thus, the distribution is not conditional on $\Rs$. The distances from the TN UE to any NTN UE is denoted by $\Rntn$ to distinguish it from $R_n$, the distances between TN \acp{BS} and TN UE. The CDF of $\Rntn(\mathcal{r}_n)$ is obtained as the ratio of the shaded region area shown in Fig~\ref{fig:Integral bounderies} to the total surface area of the annulus where NTN UEs are distributed as a \ac{BPP}. The shaded area is the area obtained from the intersection of the disc centered at TN UE with radius $\Rntn$ and the outer annulus.

The CDF $F_{\Rntn}(\mathcal{r}_n)=\frac{{\cal A}(\mathcal{r}_n)}{\pi(\rntn^2-(\rTN+\riso)^2)}$ is a piece-wise function given in \eqref{eq:CDF_Rn_ntn}, where ${\cal A}(\mathcal{r}_n)$ is the area of shaded region, $w=\rTN+\riso$, $F_{a,b}\left(y\right)=(y-b)\sqrt{a^2-(b-y)^2}+a^2\tan^{-1}\left(\frac{y-b}{\sqrt{a^2-(b-y)^2}}\right)$. As can be seen from \eqref{eq:CDF_Rn_ntn}, the distribution of interfering distances depends on the location of TN user and the geometry of the network in terms of $\rTN$, $\riso$, and $\rntn$. The PDF of $\Rntn$ can be obtained by taking the derivative w.r.t. $\mathcal{r}_n$.

\section{Exact Performance Analysis}
\label{sec:cov}
In this section, we focus on the performance analysis of the integrated network for the scenarios described in Section~\ref{sec:sysmod}, in terms of coverage probability and data rate of a \ac{TN} \ac{UE} which is arbitrarily located on Earth. We utilize stochastic geometry in order to formulate coverage probability and rate as a function of the network's parameters and the propagation characteristics of the channels. One main component of our analytical derivations is the Laplace function of interference which will be presented throughout this section for both cases.

% %----------------------------------------------------------------------
\subsection{ Performance Analysis for Coexistence Case~I: NTN DL as the aggressor and TN DL as the victim}
We define the coverage probability as the probability that the \ac{SINR} at the TN UE's receiver is greater than a minimum required \ac{SINR}, denoted by $T$. Thus, we have
\begin{align}
\label{eq:coverage probability}
\Pc\left(T\right) \triangleq \Prob\left(\SINRTN > \hspace{-1 mm}T\right).
\end{align}

Based on the above given definition, in the following theorem, we derive an analytical expression for the probability of coverage for co-existence case~I over a Nakagami-$m$ fading serving channel.

\begin{theorem}
    \label{theorem:coverage senario3}
    The coverage probability for a \ac{TN} user in Co-existence Case~I is
    \begin{align}
\label{eq:Coverage_probability_scenario3}
&\Pc\left(T\right)=\int_{0}^{\rTN+x_0} f_{\Rs}\left(\rs\right)\Bigg[e^{-s \sigma^2}\\\nonumber
&\sum_{k=0}^{m-1}\frac{\sum_{l=0}^{k}\hspace{-1 mm}\binom{k}{l}\hspace{-1 mm}\left(s \sigma^2\right)^l \left(-s\right)^{k-l}\hspace{-1 mm}\frac{\partial^{k-l}}{\partial s^{k-l}}\left(\mathcal{L}_{\Itn}(s)\mathcal{L}_{\IntnI}(s)\right)}{k!}\Bigg]_{s=\Sc} \hspace{-5 mm}d\rs,
\end{align}
where $\Sc=\frac{mT\rs^{\atn}}{\Ptn \Gtn}$, $\mathcal{L}_{\Itn}\left(s\right)$ and $\mathcal{L}_{\Itn}\left(s\right)$ are the Laplace transform of cumulative interference power $\Itn$ and $\IntnI$ which are expressed in Lemma~\ref{lem:Laplace scenario 3} and Lemma~\ref{lem:Laplace ntn scenario 3}, respectively.
\end{theorem}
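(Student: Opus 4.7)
The plan is to prove the expression in \eqref{eq:Coverage_probability_scenario3} by conditioning first on the serving distance $\Rs=\rs$, exploiting the Nakagami-$m$ fading assumption on the serving link, and then de-conditioning through the serving distance PDF obtained from \eqref{eq:CDF_R0}. The starting point is to write
\begin{align*}
\Pc(T\mid \Rs=\rs)=\Prob\!\left(\Htn>\tfrac{T\rs^{\atn}}{\Ptn\Gtn}(\Itn+\IntnI+\sigma^2)\,\Big|\,\Rs=\rs\right),
\end{align*}
so that the randomness of $\Htn$ can be separated from that of the aggregate interference $\Itn+\IntnI$.

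Next, I would invoke the fact that for integer $m$ the Nakagami-$m$ power gain $\Htn$ is Gamma-distributed with shape $m$ and unit mean, giving the closed-form CCDF $\Prob(\Htn>y)=e^{-my}\sum_{k=0}^{m-1}(my)^k/k!$. Applying this with $y=\tfrac{T\rs^{\atn}}{\Ptn\Gtn}(\Itn+\IntnI+\sigma^2)$ and setting $s=\Sc=mT\rs^{\atn}/(\Ptn\Gtn)$ yields
\begin{align*}
\Pc(T\mid\rs)=\mathbb{E}\!\left[e^{-s(\Itn+\IntnI+\sigma^2)}\sum_{k=0}^{m-1}\frac{\left(s(\Itn+\IntnI+\sigma^2)\right)^k}{k!}\right]_{s=\Sc}.
\end{align*}
I would then pull the deterministic factor $e^{-s\sigma^2}$ outside the expectation and expand $(s(\Itn+\IntnI+\sigma^2))^k$ via the binomial theorem to isolate the $\sigma^2$ contributions from the random interference $I=\Itn+\IntnI$.

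The remaining expectation reduces to terms of the form $\mathbb{E}[I^{k-l}e^{-sI}]$, which I would convert into derivatives of the Laplace transform by the standard identity $\mathbb{E}[I^{j}e^{-sI}]=(-1)^{j}\tfrac{\partial^{j}}{\partial s^{j}}\mathcal{L}_I(s)$. Independence between the intra-cluster TN interference and the NTN satellite interference (since the locations of the other TN \acp{BS} and of the serving LEO are governed by independent processes) then gives $\mathcal{L}_{I}(s)=\mathcal{L}_{\Itn}(s)\mathcal{L}_{\IntnI}(s)$, producing exactly the product whose derivatives appear in \eqref{eq:Coverage_probability_scenario3}. A final integration against $f_{\Rs}(\rs)$, whose support is $[0,\rTN+x_0]$ as dictated by \eqref{eq:CDF_R0}, completes the de-conditioning.

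The bookkeeping of the binomial expansion together with the Leibniz-style derivatives of the Laplace transform product is the step most prone to algebraic slips; the genuine technical content sits in two places that I would flag carefully. First, the use of the integer-$m$ Gamma CCDF is what turns an otherwise intractable CDF inequality into a finite sum of polynomial-times-exponential moments, and this is what makes the derivative expression in \eqref{eq:Coverage_probability_scenario3} possible. Second, I would justify independence of $\Itn$ and $\IntnI$ conditioned on $\Rs=\rs$: the intra-cluster interferers have distances distributed according to \eqref{eq:Rn_R0_PDF}, while $\IntnI=\Pntn\Gntn\Hntn\Rntnsat^{-\antn}$ depends only on the satellite–user geometry and the (independent) fading $\Hntn$. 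Provided this independence is granted, all Laplace transforms factor, the sum collapses to the stated form, and the result follows. The explicit forms of $\mathcal{L}_{\Itn}(s)$ and $\mathcal{L}_{\IntnI}(s)$ are then supplied by the referenced lemmas and are not needed inside this proof.
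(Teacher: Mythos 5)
Your proposal is correct and follows essentially the same route as the paper's proof in Appendix~\ref{Appen:proof coverage senario3}: conditioning on $\Rs=\rs$, invoking the integer-$m$ Gamma (Erlang) tail $e^{-my}\sum_{k=0}^{m-1}(my)^k/k!$ (which the paper writes via the normalized incomplete gamma function before expanding it), applying the binomial expansion to separate $\sigma^2$ from $\Itn+\IntnI$, converting $\mathbb{E}[I^{j}e^{-sI}]$ into $(-1)^{j}\partial^{j}\mathcal{L}_I(s)/\partial s^{j}$, and factoring $\mathcal{L}_{\Itn}\mathcal{L}_{\IntnI}$ by independence. No substantive differences to report.
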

\begin{proof}
 See Appendix~\ref{Appen:proof coverage senario3}.
\end{proof}

\begin{lem}
\label{lem:Laplace scenario 3}
Laplace function of cumulative interference power $\Itn$ under general fading channels is
\begin{align}
\label{eq:Laplace_TN}
&\mathcal{L}_{\Itn}(s)\triangleq\mathbb{E}_{\Itn}\left[e^{-s\Itn}\right]\\\nonumber
&= \left( \int_{r_0}^{\rTN+x_0}\hspace{-6 mm}\mathcal{L}_{\Htn }\left(s\Ptn \Gtn r_n^{-\atn}\right)f_{R_n|\Rs}(\rn|\rs)\,dr_n\right)^{\Nc}\hspace{-3 mm}.
\end{align}
\end{lem}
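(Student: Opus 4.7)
The plan is to unfold the definition $\mathcal{L}_{\Itn}(s) \triangleq \mathbb{E}[e^{-s\Itn}]$ and exploit the i.i.d.\ structure of the BPP-distributed interferers conditioned on the serving distance. First, I would substitute the explicit form $\Itn = \sum_{n=1}^{\Nc-1} \Ptn \Gtn \Htn R_n^{-\atn}$ given in the paragraph before \eqref{eq:SINR}, so that the exponential of a sum becomes a product:
\begin{equation*}
\mathcal{L}_{\Itn}(s) = \mathbb{E}\!\left[\prod_{n} \exp\!\left(-s\,\Ptn \Gtn \Htn R_n^{-\atn}\right)\right].
\end{equation*}

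Next, I would argue that conditional on $\Rs = r_0$, the $\Nc-1$ interferer distances $R_n$ are i.i.d.\ with density $f_{R_n|\Rs}(\cdot|r_0)$ from \eqref{eq:Rn_R0_PDF}, supported on $[r_0,\rTN+x_0]$, and that the Nakagami-$m$ fading gains on the interfering links are mutually independent and independent of everything else. This lets the joint expectation factor into $\Nc-1$ (written as $\Nc$ in the lemma's exponent) identical one-interferer expectations. For a single interferer, I would condition on $R_n=r_n$, take the expectation over $\Htn$ first, which by definition of the Laplace transform of the fading yields $\mathcal{L}_{\Htn}(s\,\Ptn\Gtn r_n^{-\atn})$, and then average over $r_n$ against $f_{R_n|\Rs}(r_n|r_0)$. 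Raising the resulting integral to the appropriate power delivers \eqref{eq:Laplace_TN}.

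The main subtlety is justifying the conditional-i.i.d.\ property of the interferer distances: starting from $\Nc$ i.i.d.\ points inside the cluster and conditioning on the \emph{minimum} distance being exactly $r_0$, the remaining $\Nc-1$ distances are i.i.d.\ samples from $F_{R_n}$ truncated to $[r_0,\rTN+x_0]$, which is precisely the conditional CDF in \eqref{eq:Rn_R0_CDF}. Once that point is secured, the rest is a direct application of Fubini together with the definition of the Laplace transform of the fading gain, and the integration limits $r_0$ and $\rTN+x_0$ follow mechanically from the support of $f_{R_n|\Rs}(\cdot|r_0)$.
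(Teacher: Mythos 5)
Your proposal matches the paper's own proof step for step: expand the sum in the exponent into a product, use the (conditional) independence of the fading gains and of the interferer distances to factor the expectation into identical single-interferer terms, substitute the Laplace transform of the fading gain, integrate against $f_{R_n|\Rs}(\rn|\rs)$ over $[r_0,\rTN+x_0]$, and raise to the power of the number of interferers. You are also right to flag the exponent: the paper defines $\Itn$ as a sum over $\Nc-1$ interferers but its appendix silently runs the sum (and hence the power) up to $\Nc$, so the $\Nc$ versus $\Nc-1$ discrepancy you note is an inconsistency in the paper itself rather than a gap in your argument.
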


\begin{proof}
See Appendix~\ref{Appen:proof Laplace scenario3}
\end{proof}

\begin{lem}
\label{lem:Laplace ntn scenario 3}
Laplace function of cumulative interference power $\Itn$ under general fading channels is
\begin{align}
\label{eq:Laplace_NTN_CaseI}
\nonumber
&\mathcal{L}_{\IntnI}(s)\triangleq\mathbb{E}_{\Rntnsat, \IntnI}\left[e^{-s\IntnI}\right]=\\\nonumber
&\int_{a}^{\rmax}\hspace{-4 mm}\mathbb{E}_{\Hntn}\left[\exp{\left(-s\Pntn \Gntn \Hntn \rntnsat^{-\antn}\right)}\right]f_{\Rntnsat}(\mathcal{r}_n)\,d\mathcal{r}_n\\
&=\int_{a}^{\rmax}\mathcal{L}_{\Hntn}(s\Pntn\Gntn \rntnsat^{-\antn})f_{\Rntnsat}(\mathcal{r}_n)\,d\mathcal{r}_n,
\end{align}
where $f_{\Rntnsat}(\mathcal{r}_n)$ is the PDF of the distance between the satellite and a TN user which can be obtained from \cite[Lemma~1]{okati2020downlink}, and $\rmax=\sqrt{2\rEarth a+a^2}$ denotes the maximum possible distance between a satellite and a UE that is realized when the satellite is at the user's horizon. 
\end{lem}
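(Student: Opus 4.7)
The plan is to reduce the claim to a standard conditioning argument on the random satellite-to-user distance $\Rntnsat$. First, I would expand the definition $\mathcal{L}_{\IntnI}(s) = \mathbb{E}[e^{-s \IntnI}]$ and substitute $\IntnI = \Pntn \Gntn \Hntn \Rntnsat^{-\antn}$. The randomness here splits into two independent pieces: the small-scale fading $\Hntn$ on the satellite-to-ground link, and the random distance $\Rntnsat$ induced by the satellite's random position above the TN cluster. Since the channel fading is by assumption independent of the geometry, I can apply the tower property of conditional expectation and write the Laplace transform as an outer expectation over $\Rntnsat$ of an inner conditional expectation over $\Hntn$, the latter being taken with $\Rntnsat$ frozen at some value $\rntnsat$.

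Next, I would recognise the inner conditional expectation as precisely the Laplace transform $\mathcal{L}_{\Hntn}$ of the fading random variable evaluated at the argument $s \Pntn \Gntn \rntnsat^{-\antn}$; this follows immediately because, once $\rntnsat$ is fixed, the exponent $-s \Pntn \Gntn \Hntn \rntnsat^{-\antn}$ is linear in $\Hntn$ so the expectation is just the moment generating function of $-\Hntn$ with rescaled argument. The outer expectation is then written as an integral against the PDF $f_{\Rntnsat}(\mathcal{r}_n)$ of the satellite-to-ground distance, which is imported verbatim from \cite[Lemma~1]{okati2020downlink} rather than rederived. The support of this PDF is pinned down geometrically: the minimum distance $a$ is achieved when the satellite is directly overhead, while the maximum distance $\rmax=\sqrt{2\rEarth a + a^2}$ corresponds to the satellite sitting exactly on the user's horizon, obtained from the right triangle formed by the Earth's centre, the user, and the satellite.

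I expect the proof to be essentially routine; there is only a single interfering satellite here, so no product form over a point process is needed, in contrast to Lemma~\ref{lem:Laplace scenario 3}. The only subtlety is bookkeeping: stating independence between $\Hntn$ and $\Rntnsat$ explicitly (standard in this literature but worth flagging), and verifying that the imported distance PDF of \cite[Lemma~1]{okati2020downlink} is applicable to the present setting of a single visible satellite at fixed altitude $a$ whose footprint overlaps the TN cluster. Once these are acknowledged, assembling the pieces immediately yields the claimed single-integral expression.
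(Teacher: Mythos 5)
Your proposal is correct and follows essentially the same route as the paper, whose proof is simply the one-line observation that the Laplace transform is obtained by taking expectations over the two random variables $\Rntnsat$ and $\Hntn$; you have merely spelled out the tower-property/independence step, the identification of the inner expectation with $\mathcal{L}_{\Hntn}$, and the geometric origin of the integration limits $a$ and $\rmax$. No gaps.
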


\begin{proof}
The Laplace function is obtained by taking expectations over the two random variables, i.e., the distance between the LEO satellite and TN UE, $\Rntnsat$, and the channel gain $\Hntn$. 
\end{proof}

As a result, assuming some specific fading distributions, $\mathcal{L}_{\Hntn}(\cdot)$ can be specified at the point $s\Pntn\Gntn \rntnsat^{-\antn}$. In other words, the Laplace transform of NTN interference for any fading distribution can be calculated using Lemma~\ref{lem:Laplace ntn scenario 3}. For instance, when $\Hntn$ has a gamma distribution\footnote{The gain of the channel, which is the square of Nakagami random variable, follows a gamma distribution.}, $\mathcal{L}_{\Hntn}(s\Pntn\Gntn \rntnsat^{-\antn})=\frac{m^{m}}{(m+s\Pntn\Gntn \rntnsat^{-\antn})^{m}}$, where $m$ is the Nakagami-$m$ fading parameter. Assuming LEO satellite to be at the zenith of the user, the Laplace function in Lemma~\ref{lem:Laplace ntn scenario 3} will be further simplified to $\mathcal{L}_{\IntnI}(s)=\frac{m^{m}}{(m+s\Pntn\Gntn\, a^{-\antn})^{m}}$.

The average achievable data rate (in bit/s/Hz) is defined as the ergodic capacity for a fading communication link which is derived from Shannon-Hartley theorem and normalized to unit bandwidth.
The average achievable rate is defined as
\begin{align}
\label{eq:data rate def.}
\bar{C} \triangleq\mathbb{E}\left[\log_2\left(1+\SINRTN\right)\right].
\end{align}
In the following theorem, we derive the expression for the average rate of the victim \ac{TN} user over Nakagami-$m$ fading serving channel. Both \ac{TN} and \ac{NTN} interfering channels may follow any arbitrary distribution.
%------------------------
\begin{figure}
        
         \centering
         \includegraphics[trim = 5mm 0mm 5mm 0mm, clip,width=\textwidth]{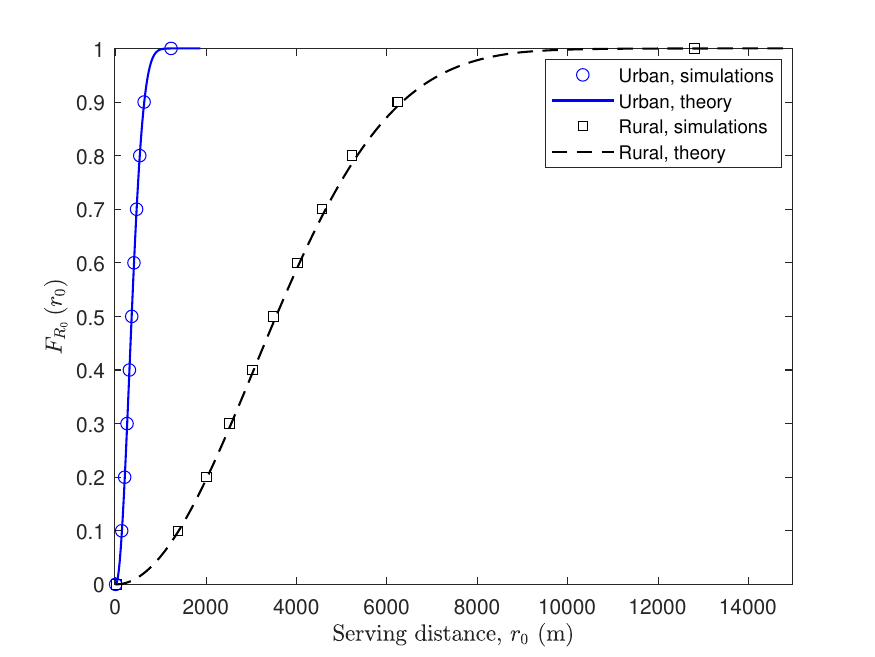}
         \caption{Verification of serving distance distribution given in \eqref{eq:CDF_R0} for urban (ISD = 0.75~km) and rural (ISD = 7.5~km) areas.}
         \label{fig:server distance}
\end{figure}
%------------------------
%------------------------
\begin{figure*}
         \begin{subfigure}[b]{0.48\textwidth}
         \centering
         \includegraphics[trim = 5mm 0mm 5mm 0mm, clip,width=\textwidth]{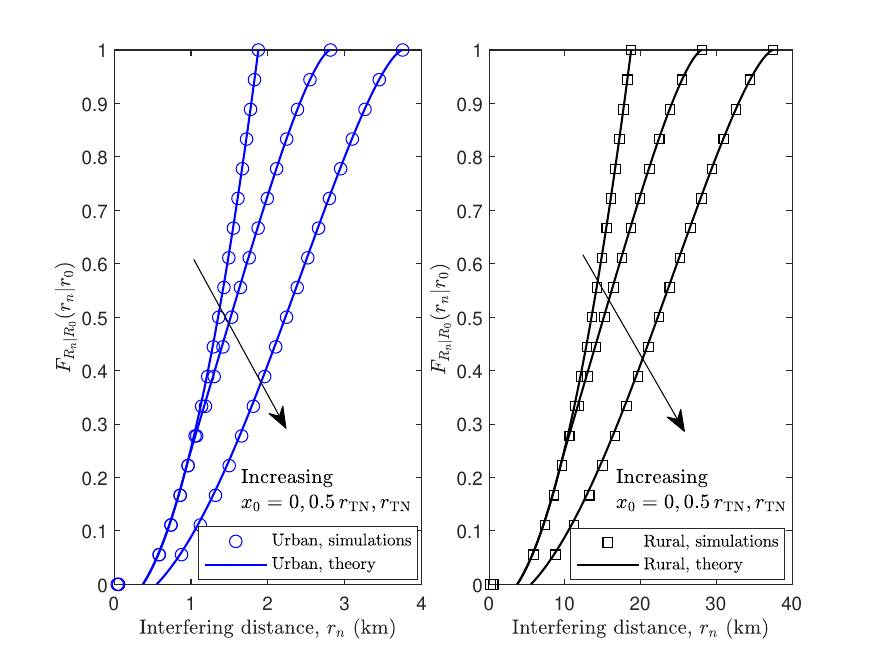}
         \caption{Case I.}
         \label{fig:Interfering distance scenario 3}
     \end{subfigure}
     \hfill
     \begin{subfigure}[b]{0.48\textwidth}
         \centering
         \includegraphics[trim = 5mm 0mm 5mm 0mm, clip,width=\textwidth]{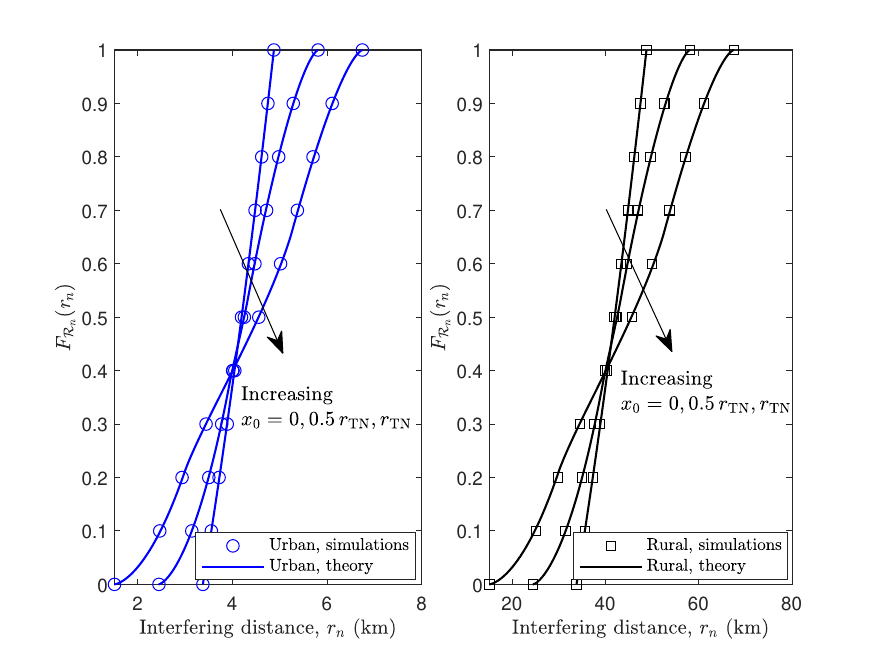}
         \caption{Case II.}
         \label{fig:interference distance scenario 5}
     \end{subfigure}
     \hfill 
    \caption{Verification of the interfering distance distribution, for cases~I (a), given as in \eqref{eq:Rn_R0_CDF}, and II (b) given as in \eqref{eq:Rn_R0_CDF}.}
   \label{fig:interfering dist}
\end{figure*}
%------------------------
\begin{theorem}
    \label{theorem:rate senario3}
    The average achievable rate for a \ac{TN} user in Co-existence Case~I is
    \begin{align}
\label{eq: rate senario3}
&\bar{C}=\int_{0}^{\rTN+x_0}\hspace{-2 mm}\int_0^{\infty}f_{\Rs}\left(\rs\right)\Bigg[e^{-{s\sigma^2 }}\\\nonumber
&\sum_{k=0}^{m-1}\frac{\sum_{l=0}^{k}\hspace{-1 mm}\binom{k}{l}\hspace{-1 mm}\left(s \sigma^2\right)^l \left(-s\right)^{k-l}\hspace{-1 mm}\frac{\partial^{k-l}}{\partial s^{k-l}}\left(\mathcal{L}_{\Itn}(s)\mathcal{L}_{\IntnI}(s)\right)}{k!}\Bigg]_{s=\sr} \hspace{-6 mm}dt d\rs,
\end{align}
where $\sr=\frac{m\left(2^t-1\right)\rs^{\atn}}{\Ptn \Gtn}$,  $\mathcal{L}_{\Itn}\left(s\right)$ and $\mathcal{L}_{\IntnI}$ are obtained from Lemmas~\ref{lem:Laplace scenario 3} and \ref{lem:Laplace ntn scenario 3}, respectively.
\end{theorem}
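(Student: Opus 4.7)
The plan is to reduce the average rate to an integral of the coverage probability $\Pc$ already derived in Theorem~\ref{theorem:coverage senario3}. The key analytic ingredient is the standard tail-probability identity for a nonnegative random variable $Y$, namely $\mathbb{E}[Y]=\int_0^\infty \Prob(Y>t)\,dt$. Applying this to $Y=\log_2(1+\SINRTN)\geq 0$ and using the strict monotonicity of $t\mapsto 2^t-1$ to rewrite the event $\{\log_2(1+\SINRTN)>t\}=\{\SINRTN>2^t-1\}$, the definition in \eqref{eq:data rate def.} gives
\begin{equation*}
\bar{C}=\int_0^\infty \Prob\!\left(\SINRTN>2^t-1\right)dt=\int_0^\infty \Pc\!\left(2^t-1\right)dt,
\end{equation*}
which turns the rate computation into a one-dimensional integral of a quantity already in closed form.

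Next, I would substitute the closed-form expression for $\Pc(T)$ from Theorem~\ref{theorem:coverage senario3}. Setting $T=2^t-1$ in the coverage formula only modifies the evaluation point for the Laplace-transform derivatives: the scalar $\Sc=mT\rs^{\atn}/(\Ptn\Gtn)$ is replaced by $\sr=m(2^t-1)\rs^{\atn}/(\Ptn\Gtn)$, while the rest of the bracketed expression, along with the outer integration against $f_{\Rs}(\rs)$ on $[0,\rTN+x_0]$, remains structurally unchanged. This produces a double integral whose integrand coincides with the one displayed in \eqref{eq: rate senario3}, up to the order of integration.

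To arrive at exactly the form stated in the theorem, I would then swap the outer $t$-integral with the $\rs$-integral. This is legitimate by the Fubini--Tonelli theorem since the integrand is nonnegative: $f_{\Rs}\geq 0$, $e^{-s\sigma^2}>0$, and the bracketed sum represents the CCDF of a gamma random variable evaluated at a threshold, so it is bounded in $[0,1]$ for all admissible $(t,\rs)$. After the swap the limits on $\rs$ and $t$ are exactly $[0,\rTN+x_0]$ and $[0,\infty)$, matching \eqref{eq: rate senario3}.

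The main obstacle is essentially nominal once Theorem~\ref{theorem:coverage senario3} is in hand; all the nontrivial work, including the Nakagami-$m$ conditioning argument that produces the finite sum with $\binom{k}{l}$ coefficients and the partial derivatives of $\mathcal{L}_{\Itn}(s)\mathcal{L}_{\IntnI}(s)$, has already been carried out in Theorem~\ref{theorem:coverage senario3} and Lemmas~\ref{lem:Laplace scenario 3}--\ref{lem:Laplace ntn scenario 3}. The only genuine care is to verify the applicability of Fubini--Tonelli and to track the substitution $T\mapsto 2^t-1$ correctly inside the evaluation point $s$; the result then follows immediately.
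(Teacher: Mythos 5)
Your proposal is correct and follows essentially the same route as the paper: the appendix proof also hinges on the identity $\mathbb{E}[X]=\int_{t>0}\Prob(X>t)\,dt$ applied to $\log_2(1+\SINRTN)$ and then re-runs the Nakagami-$m$/incomplete-gamma expansion of Theorem~\ref{theorem:coverage senario3} with the threshold $T$ replaced by $2^t-1$, which is exactly the substitution $\Sc\mapsto\sr$ you describe. Your version is just a more economical packaging that invokes Theorem~\ref{theorem:coverage senario3} as a black box instead of repeating its steps, with the Fubini--Tonelli justification for the order of integration made explicit.
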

\begin{proof}
 See Appendix~\ref{Appen:proof rate scenario 3}.
\end{proof}

\subsection{ Performance Analysis for Co-existence Case~II: NTN UL as the aggressor and TN DL as the victim}
In the following theorem, we analytically formulate the probability of coverage for co-existence case~II, utilizing the definition provided earlier in \eqref{eq:coverage probability}. Similar to Theorem~\ref{theorem:coverage senario3}, the serving channel follows a Nakagami-$m$ distribution while interfering channels may follow arbitrary distributions.

\begin{theorem}
    \label{theorem:coverage senario5}
    The coverage probability for a \ac{TN} user in Co-existence Case~II is
    \begin{align}
\label{eq:Coverage_probability_scenario5}
&\Pc\left(T\right)=\int_{0}^{\rTN+x_0} f_{\Rs}\left(\rs\right)\Bigg[e^{-s \sigma^2}\\\nonumber
&\sum_{k=0}^{m-1}\frac{\sum_{l=0}^{k}\hspace{-1 mm}\binom{k}{l}\hspace{-1 mm}\left(s \sigma^2\right)^l \hspace{-1 mm}\left(-s\right)^{k-l}\hspace{-1 mm}\frac{\partial^{k-l}}{\partial s^{k-l}}\left(\mathcal{L}_{\Itn}(s)\mathcal{L}_{\IntnII}(s)\right)}{k!}\Bigg]_{s=\Sc} \hspace{-6 mm}d\rs,
\end{align}
where $\Sc=\frac{mT\rs^{\atn}}{\Ptn \Gtn}$, and $\mathcal{L}_{\IntnII}\left(s\right)$ is the Laplace transform of cumulative interference power $\IntnII$ from all NTN UEs located on the outer annulus shown in Fig.~\ref{fig:Integral bounderies} that is expressed in Lemma~\ref{lem:Laplace scenario 5}. Obviously, $\mathcal{L}_{\Itn}\left(s\right)$ remains the same as for Case~I and is obtained from Lemma~\ref{lem:Laplace scenario 3}.
\end{theorem}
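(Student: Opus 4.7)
The plan is to mirror the proof of Theorem~\ref{theorem:coverage senario3} since the coverage expression has exactly the same algebraic skeleton; only the NTN interference term is different. I would begin by conditioning on the serving distance $R_0=\rs$ and peeling off the outer integral against $f_{\Rs}(\rs)$, so that the task reduces to evaluating $\Prob(\SINRTN>T\mid \Rs=\rs)$. Substituting the SINR expression from \eqref{eq:SINR} with $\Intn=\IntnII$ turns this conditional probability into $\Prob\bigl(\Htn > \tfrac{T\rs^{\atn}}{\Ptn\Gtn}(\Itn+\IntnII+\sigma^2)\bigr)$.

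Next I would invoke the complementary CDF of a gamma random variable with integer shape parameter $m$ (the squared Nakagami-$m$), namely $\Prob(\Htn>u)=e^{-mu}\sum_{k=0}^{m-1}(mu)^k/k!$. Setting $s=\Sc=\tfrac{mT\rs^{\atn}}{\Ptn\Gtn}$ converts the conditional probability into $\mathbb{E}\bigl[e^{-s(\Itn+\IntnII+\sigma^2)}\sum_{k=0}^{m-1} s^k(\Itn+\IntnII+\sigma^2)^k/k!\bigr]$. Each moment term is handled via the identity $\mathbb{E}[X^k e^{-sX}]=(-1)^k\frac{d^k}{ds^k}\mathcal{L}_X(s)$ applied to $X=\Itn+\IntnII+\sigma^2$, and a Leibniz expansion separates the deterministic factor $e^{-s\sigma^2}$ from the stochastic Laplace transform. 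This yields exactly the inner binomial sum in \eqref{eq:Coverage_probability_scenario5} once the powers of $s$ are distributed as $s^k=s^l\cdot s^{k-l}$ to absorb $(\sigma^2)^l(-1)^{k-l}$ into $(s\sigma^2)^l(-s)^{k-l}$.

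The remaining step is to justify the factorization $\mathcal{L}_{\Itn+\IntnII}(s)=\mathcal{L}_{\Itn}(s)\,\mathcal{L}_{\IntnII}(s)$, which I would argue follows from the independence of the two point processes: the TN BSs form a BPP on the disc of radius $\rTN$ around the TN cluster center, whereas the NTN aggressor UEs form an independent BPP on the annulus $[\rTN+\riso,\rntn]$, and the fading variables $\Htn$ and $\Hntn$ are likewise independent across links. Conditioning on $\Rs=\rs$ only restricts the TN interferers (through $f_{R_n\mid\Rs}$) and leaves $\IntnII$ untouched, so the two Laplace transforms factor cleanly. After this factorization, $\mathcal{L}_{\Itn}(s)$ is imported verbatim from Lemma~\ref{lem:Laplace scenario 3} and $\mathcal{L}_{\IntnII}(s)$ from Lemma~\ref{lem:Laplace scenario 5}.

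The main obstacle I anticipate is bookkeeping rather than conceptual: keeping the binomial/Leibniz expansion aligned with the $(-s)^{k-l}$ and $(s\sigma^2)^l$ groupings so that the final expression matches \eqref{eq:Coverage_probability_scenario5} exactly, and being careful that the independence argument for the Laplace factorization truly applies under the conditioning on $\Rs$. A secondary care point is the integration limit $\rTN+x_0$, which comes from the support of $f_{\Rs}$ derived in \eqref{eq:CDF_R0}; I would note that outside this range the integrand vanishes, so no contribution is lost. With these points addressed, the resulting expression is precisely \eqref{eq:Coverage_probability_scenario5}, and the proof can be made essentially a one-line reference to Appendix~\ref{Appen:proof coverage senario3} with $\mathcal{L}_{\IntnI}$ replaced by $\mathcal{L}_{\IntnII}$.
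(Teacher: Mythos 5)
Your proposal is correct and matches the paper's approach exactly: the paper's own proof is precisely the one-line reference you arrive at, namely repeating the Appendix~\ref{Appen:proof coverage senario3} derivation with $\IntnI$ replaced by $\IntnII$, relying on the same gamma-CCDF expansion, binomial/Leibniz bookkeeping, and independence-based factorization $\mathcal{L}_{\Itn+\IntnII}(s)=\mathcal{L}_{\Itn}(s)\,\mathcal{L}_{\IntnII}(s)$. Your explicit remark that conditioning on $\Rs$ leaves $\IntnII$ untouched (since the NTN UEs live on the disjoint annulus) is a worthwhile clarification the paper leaves implicit.
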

\begin{proof}
The expression can be obtained using similar steps as shown in the proof of Theorem~\ref{theorem:coverage senario3} given in Appendix~\ref{Appen:proof coverage senario3} by only substituting $\IntnII\to\IntnI$. 
%The only difference is the source of cumulative interference power which is from NTN UL 
%both TN and NTN network for this case. Thus, \eqref{eq:Coverage_probability_scenario5} is obtained by substitution $\Itn\to\Itn+\Intn$ in Theorem~\ref{theorem:coverage senario3} and considering the independence of $\Itn$ and $\Intn$, i.e., $\mathbb{E}_{\Intn,\Itn}\left[e^{-s(\Intn+\Itn)}\right]=\mathbb{E}_{\Intn}\left[e^{-s\Intn}\right] \mathbb{E}_{\Itn}\left[e^{-s\Itn}\right]$. The expression for $\mathcal{L}_{\Itn}(s)$ is derived in the following lemma. {\color{red}Note that $\Intn$ in Theorem~\ref{theorem:coverage senario3} is from NTN DL which is zero for coexistence Case~II.}
\end{proof}

\begin{lem}
\label{lem:Laplace scenario 5}
Laplace function of cumulative interference power $\IntnII$ under general fading channels is
\begin{align}
\label{eq:Laplace_NTN}
&\mathcal{L}_{\IntnII}(s)\triangleq\mathbb{E}_{\IntnII}\left[e^{-s\IntnII}\right]\\\nonumber
&= \left( \int_{w-x_0}^{\rntn+x_0}\mathcal{L}_{\Hntn }\left(s\Pntn \Gntn \mathcal{r}_n^{-\atn}\right)f_{\Rntn}(\mathcal{r}_n)\,d\mathcal{r}_n\right)^{\Nu}.
\end{align}
\end{lem}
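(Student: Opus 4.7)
The plan is to follow the same template as the proof of Lemma~\ref{lem:Laplace scenario 3}, exploiting the fact that in Case~II the aggregate NTN interference $\IntnII=\sum_{n=1}^{\Nu}\Pntn\Gntn\Hntn\Rntn^{-\atn}$ is a sum of $\Nu$ terms indexed by NTN UEs whose positions form a BPP on the outer annulus and whose fading gains $\Hntn$ are i.i.d.\ across UEs and independent of the UE locations.

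First, I would write the definition $\mathcal{L}_{\IntnII}(s)=\mathbb{E}[\exp(-s\IntnII)]$ and, using the mutual independence of the $(\Rntn,\Hntn)$ pairs across $n$, split the joint expectation into $\prod_{n=1}^{\Nu}\mathbb{E}[\exp(-s\Pntn\Gntn\Hntn\Rntn^{-\atn})]$. Since a BPP's points are i.i.d.\ and the fading is identically distributed across UEs, all $\Nu$ factors coincide, producing the outer $\Nu$-th power in the claimed expression.

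Second, I would unpack the single-UE expectation by conditioning on $\Rntn=\mathcal{r}_n$ and taking the expectation over $\Hntn$ first; by definition of the Laplace transform, this inner expectation equals $\mathcal{L}_{\Hntn}(s\Pntn\Gntn\mathcal{r}_n^{-\atn})$. Removing the conditioning then produces the one-dimensional integral against the PDF $f_{\Rntn}(\mathcal{r}_n)$ obtained by differentiating the piecewise CDF~\eqref{eq:CDF_Rn_ntn}. The integration limits $[w-x_0,\rntn+x_0]$ follow directly from the geometry of Fig.~\ref{fig:Integral bounderies}: the minimum TN-to-NTN distance is achieved when an NTN UE lies on the inner annulus boundary of radius $w=\rTN+\riso$ along the segment toward the TN UE, yielding $w-x_0$, while the maximum is attained at the diametrically opposite point on the outer boundary of radius $\rntn$, yielding $\rntn+x_0$. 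These precisely match the transitions in~\eqref{eq:CDF_Rn_ntn}, where the CDF vanishes below $w-x_0$ and saturates at $1$ above $\rntn+x_0$, so contributions from $\mathcal{r}_n$ outside this range vanish.

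The step I expect to be most delicate is confirming the correct support of $f_{\Rntn}$ against the four piecewise branches of~\eqref{eq:CDF_Rn_ntn}, because the density inherits a nontrivial shape reflecting the intersection geometry of the TN UE-centered disc with the annular BPP region. Once that support is pinned down, the remaining manipulations mirror the proof of Lemma~\ref{lem:Laplace scenario 3} almost verbatim, with the substitution $R_n\to\Rntn$ and the replacement of the cluster-specific conditional density $f_{R_n|\Rs}$ by the unconditional $f_{\Rntn}$; the unconditionality is justified by the fact that, in Case~II, the NTN UE positions are independent of the serving distance $\Rs$, since they live on a disjoint annulus from the TN BS cluster.
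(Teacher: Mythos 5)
Your proposal is correct and follows essentially the same route as the paper's proof: factor the joint expectation using the i.i.d.\ nature of the BPP locations and fading gains, reduce each factor to $\mathcal{L}_{\Hntn}(s\Pntn\Gntn\mathcal{r}_n^{-\atn})$ integrated against $f_{\Rntn}$, and raise to the power $\Nu$. Your added remarks on the integration limits and on the independence from $\Rs$ are consistent with what the paper states in Section~II-B.
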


\begin{proof}
See Appendix~\ref{Appen:proof Laplace scenario 5}
\end{proof}

In the following theorem, we formulate the expression for the average rate of the victim \ac{TN} user over Nakagami-$m$ fading serving channel for co-existence Case~II.
\begin{theorem}
    \label{theorem:rate senario5}
    The average achievable rate for a \ac{TN} user in Co-existence Case~II is
    \begin{align}
\label{eq: rate senario5}
&\bar{C}=\int_{0}^{\rTN+x_0}\hspace{-2 mm}\int_0^{\infty}f_{\Rs}\left(\rs\right)\Bigg[e^{-{s\sigma^2 }}\\\nonumber
&\sum_{k=0}^{m-1}\frac{\sum_{l=0}^{k}\hspace{-1 mm}\binom{k}{l}\hspace{-1 mm}\left(s\sigma^2\right)^l \hspace{-1 mm}\left(-s\right)^{k-l}\hspace{-1 mm}\frac{\partial^{k-l}}{\partial s^{k-l}}\hspace{-1 mm}\left(\mathcal{L}_{\Itn}(s)\mathcal{L}_{\IntnII}(s)\right)}{k!}\hspace{-1 mm}\Bigg]_{s=\sr} \hspace{-6 mm}dt d\rs,
\end{align}
where $\sr=\frac{m\left(2^t-1\right)\rs^{\atn}}{\Ptn \Gtn}$. 
$\mathcal{L}_{\Itn}\left(s\right)$ and $\mathcal{L}_{\IntnII}\left(s\right)$ 
are obtained from Lemmas~\ref{lem:Laplace scenario 3} and \ref{lem:Laplace scenario 5}, respectively.
\end{theorem}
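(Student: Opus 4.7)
The plan is to mirror the derivation used for Theorem~\ref{theorem:rate senario3} in Case~I, replacing only the NTN interference contribution so that the whole argument reduces to a change of a single Laplace transform. First I would start from the definition $\bar{C} = \mathbb{E}\!\left[\log_2(1+\SINRTN)\right]$ and apply the standard tail-integral identity $\mathbb{E}\!\left[\log_2(1+Y)\right] = \int_0^\infty \Prob(Y > 2^t - 1)\,dt$, valid for any nonnegative random variable $Y$. Taking $Y = \SINRTN$, the inner probability is exactly the coverage probability evaluated at the threshold $T = 2^t - 1$ for Co-existence Case~II.

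Next, I would invoke Theorem~\ref{theorem:coverage senario5} to substitute the closed-form expression \eqref{eq:Coverage_probability_scenario5} for $\Prob(\SINRTN > 2^t - 1)$; under the replacement $T \to 2^t - 1$, the auxiliary quantity $\Sc = mT\rs^{\atn}/(\Ptn \Gtn)$ becomes $\sr = m(2^t-1)\rs^{\atn}/(\Ptn \Gtn)$. An application of Fubini's theorem to interchange the $t$-integral with the $\rs$-integral then delivers the double integral claimed in \eqref{eq: rate senario5}.

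The Laplace factor $\mathcal{L}_{\Itn}$ carries over unchanged from Lemma~\ref{lem:Laplace scenario 3}, because the TN cluster geometry and the serving-distance distribution are identical to those in Case~I. Only the NTN factor differs: the DL expression $\mathcal{L}_{\IntnI}$ of Lemma~\ref{lem:Laplace ntn scenario 3} is replaced by the UL expression $\mathcal{L}_{\IntnII}$ of Lemma~\ref{lem:Laplace scenario 5}. The product form $\mathcal{L}_{\Itn}(s)\,\mathcal{L}_{\IntnII}(s)$ is justified by the conditional independence of $\Itn$ and $\IntnII$ given the serving distance, which is guaranteed by the fact that the NTN UEs lie on the outer annulus separated from the TN cluster by the isolation distance $\riso$, so their positions are sampled independently of the TN BPP.

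I expect no serious obstacle, as the algebraic scaffolding of Theorem~\ref{theorem:rate senario3} transfers verbatim. The only point I would verify carefully is that the inner double sum $\sum_{k=0}^{m-1}\frac{1}{k!}\sum_{l=0}^{k}\binom{k}{l}(s\sigma^2)^l(-s)^{k-l}\partial_s^{k-l}(\cdot)$, which arises from expanding the lower-incomplete-gamma CDF of the desired Nakagami-$m$ signal via Leibniz's rule, is unaffected by the swap of interference Laplace transforms. This is immediate because that expansion depends only on the gamma-CDF structure of the serving-signal fading and treats $\mathcal{L}_{\Itn}(s)\mathcal{L}_{\IntnII}(s)$ as a generic smooth function of $s$, so the final expression follows by direct substitution.
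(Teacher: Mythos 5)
Your proposal is correct and follows essentially the same route as the paper, which proves this theorem by repeating the Case~I rate derivation (tail-integral identity, gamma-CDF expansion for the Nakagami-$m$ serving link, factorization of the interference Laplace transforms) with $\IntnII$ substituted for $\IntnI$. Your observation that the inner probability is just the Case~II coverage probability evaluated at $T=2^t-1$ is a slightly tidier packaging of the same computation, and your justification of the product $\mathcal{L}_{\Itn}(s)\mathcal{L}_{\IntnII}(s)$ via the independence of the NTN UE positions from the TN cluster matches the paper's setup.
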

\begin{proof}
The proof follows similar procedure as shown in the proof of Theorem~\ref{theorem:rate senario3} presented in Appendix~\ref{Appen:proof rate scenario 3}. The sole difference is the origin of NTN cumulative interference power, which, in this case, arises from NTN UL.
\end{proof}
%------------------------
\begin{figure*}
         \begin{subfigure}[b]{0.48\textwidth}
         \centering
         \includegraphics[trim = 5mm 0mm 5mm 0mm, clip,width=\textwidth]{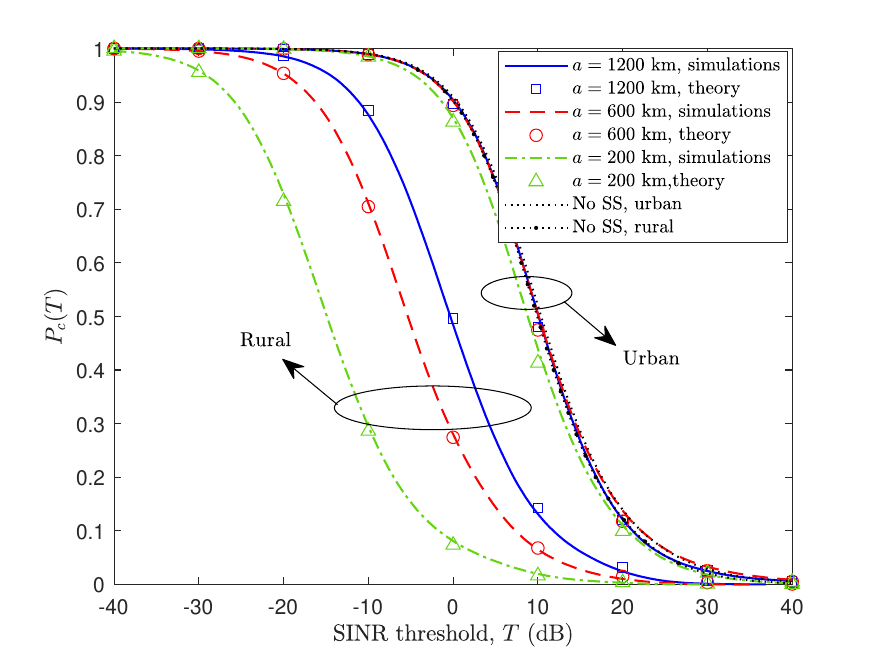}
         \caption{100\% load.}
         \label{fig:100load}
     \end{subfigure}
     \hfill
     \begin{subfigure}[b]{0.48\textwidth}
         \centering
         \includegraphics[trim = 5mm 0mm 5mm 0mm, clip,width=\textwidth]{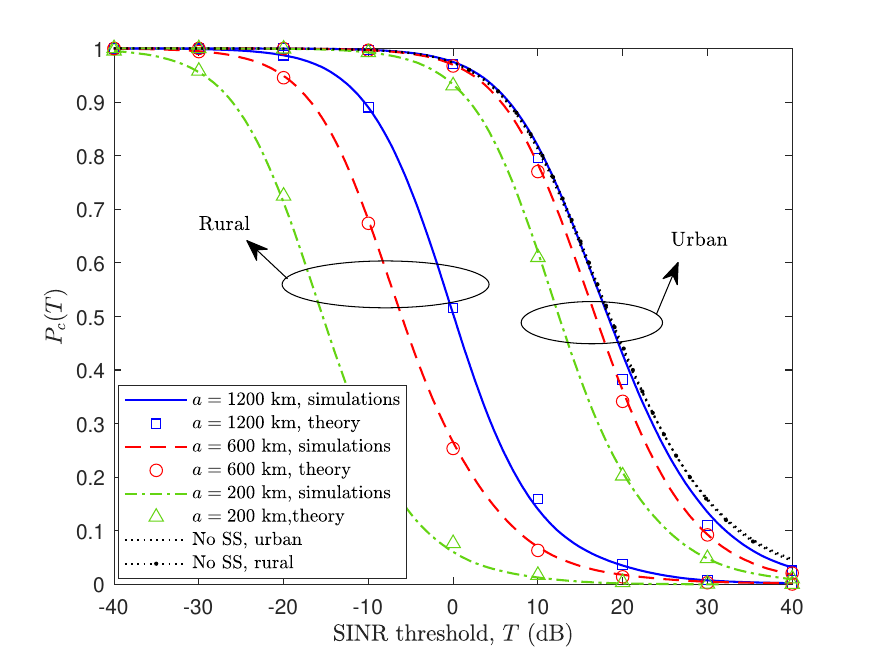}
         \caption{25\% load.}
         \label{fig:25load}
     \end{subfigure}
     \hfill 
    \caption{Coverage probability of Case~I for 100\% load (a), and 25\% load (b), for different LEO satellite altitudes. The simulations, shown by lines, verify the theoretical results, depicted by markers, given in Theorem~\ref{theorem:coverage senario3}.}
   \label{fig:Coverage_Scenario3}
\end{figure*}
% %------------------------
 %------------------------
\begin{figure*}
         \begin{subfigure}[b]{0.48\textwidth}
         \centering
         \includegraphics[trim = 5mm 0mm 5mm 0mm, clip,width=\textwidth]{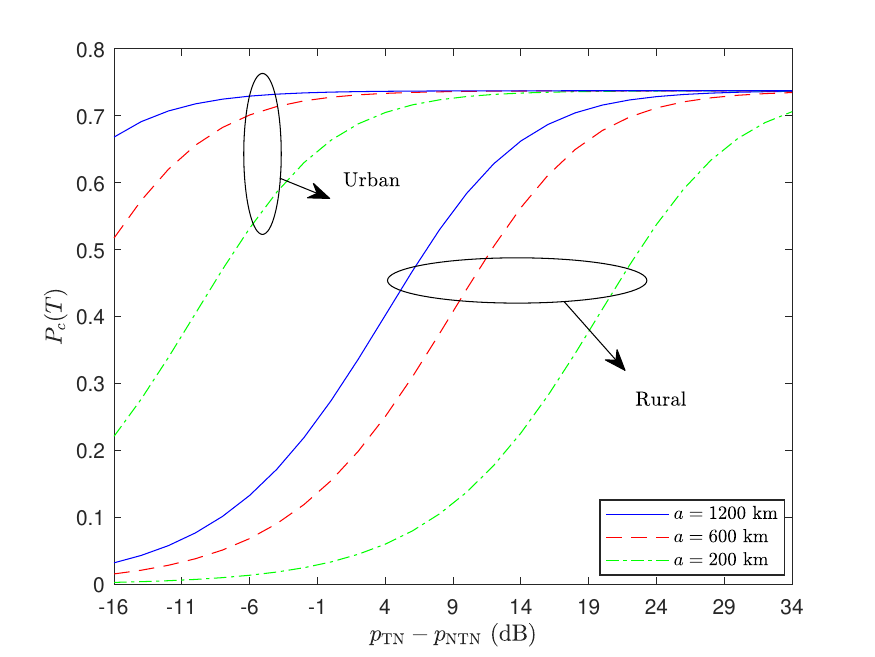}
         \caption{100\% load.}
         \label{fig:100load_pt}
     \end{subfigure}
     \hfill
     \begin{subfigure}[b]{0.48\textwidth}
         \centering
         \includegraphics[trim = 5mm 0mm 5mm 0mm, clip,width=\textwidth]{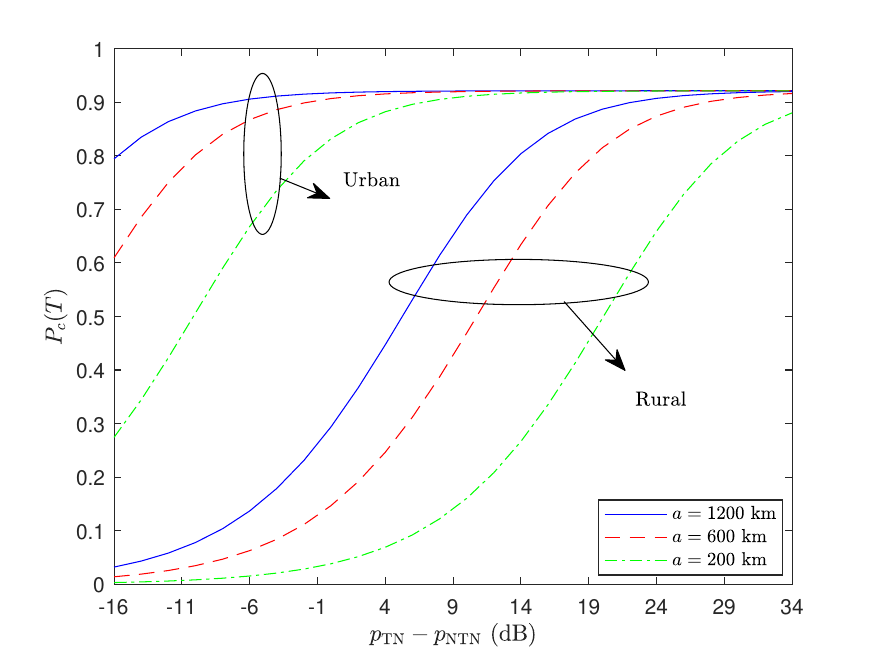}
         \caption{25\% load.}
         \label{fig:25load_pt}
     \end{subfigure}
     \hfill 
    \caption{Effect of transmit power on coverage probability of Case~I for 100\% load (a), and 25\% load (b).}
   \label{fig:Coverage_Scenario3VSpt}
\end{figure*}
% %------------------------
\section{Numerical Results}
\label{sec:Numerical Results}
{\color{black}In this section, the analytical derivations on the performance of the integrated network for each case is verified through Monte Carlo simulations. }
The \ac{TN} \acp{BS}, satellites, and UEs antenna models are chosen according to the sectorized antenna patterns given in \cite{TR38.863} for simulations. However, in order to maintain tractability for the theoretical results, we approximate such model by assuming fixed antenna gains for the main and side lobes of the antennas, similar to \cite{bole2013radar,renzo2015Stochastic,singh2015Tractable}. In fact, the main beam transmits with the maximum antenna gain given in \cite{TR38.863}, while side lobes are assumed to have 13~dB lower gain, similar to \cite{bole2013radar,TR138921}.

We consider the path-loss exponents to be $\atn=3$ and $\antn=2$ for \ac{TN} and \ac{NTN} propagation channels, respectively, according to \cite[Table~3.2]{rappaport2010wireless}. We adopt a practical fading model, i.e., Rician, for all simulation results with parameter $K$, where $K=0$ and 200 for TN and NTN channels, respectively, to account for higher LoS in \ac{NTN}.  Since the theoretical derivations are developed assuming a Nakagami-$m$ fading for the serving link to preserve the analytical tractability, a Nakagami-$m$ distribution is used for their numerical calculations by setting its parameter to $m=\frac{k^2+2K+1}{2K+1}$, by matching the first and second moments of Rician and Nakagami PDFs. {\color{black}The fair match between theoretical results and simulations confirms the high accuracy of such approximation.}
%For instance, the CDF and the PDF of $\Htn$, required to evaluate the expressions derived in Section~\ref{sec:cov}, are $F_{\Htn}(\htn) = 1-Q_{1}\left(\sqrt{2K},\sqrt{\htn}\right)$ and $f_{\Htn}(\htn) =\frac{1}{2}e^{-\frac{\htn+2K}{2}}I_0\left(\sqrt{2K\htn}\right)$, respectively, where $Q_{1}(\cdot,\cdot)$ denotes the Marcum Q-function and $I_0(\cdot)$ is the modified Bessel function of the first kind. 

% a {\color{black}Nakagami-$m$ distribution with its parameter set to integer $m=1$. Thus, the distribution of the channel power attenuation $\Htn$, being the square of a Nakagami random variable, has a gamma distribution. It is worth noting that the analytical derivations herein are valid for any fading distribution assumption on \ac{TN} interfering channels. The \ac{NTN} DL fading channels are assumed to follow a Rician distribution with parameter $K=200$ to account for higher LoS in \ac{NTN}.}
%to model various fading levels by varying the Nakagami parameter, $m$. }

The satellite transmit power, cell radius, maximum antenna gain are $\pt=46$~dBm, 25~km, and 30~dBi, respectively. The main lobe gain of the \acp{BS} antennas is assumed to be $17$~dBi. The \ac{BS} transmit power is assumed to be $\pt=46$~dBm. The altitude is set to $\{200,600,1200\}$~km, unless stated otherwise. Inter-site distances for \ac{TN} network are assumed to be 0.75~km and 7.5~km for urban and rural areas, respectively. The number of \acp{BS} on each \ac{TN} cluster is assumed to be 19.
The operating frequency and bandwidth are set to be 2~GHz and 20~MHz, respectively. 

Figure~\ref{fig:server distance} verifies the derivation of serving distance distribution given in \eqref{eq:CDF_R0} for urban and rural areas. It is worth noting that the serving distance is the same for both Cases~I and II. 
Figures~\ref{fig:interfering dist}(a) and \ref{fig:interfering dist}(b) verify the CDF of interference from TN \acp{BS} and NTN UEs derived in in \eqref{eq:Rn_R0_CDF} and \eqref{eq:CDF_Rn_ntn}, respectively, in rural and urban areas. The results are presented for a TN user located at the center, middle, and the edge of the TN cluster, i.e., $x_0=\{0, 0.5\,\rTN, \rTN\}$. As can be seen in the figures~\ref{fig:server distance} and \ref{fig:interfering dist}, there is a perfect match between the analytical results, shown by lines, and the simulations, shown by markers, which verifies the exactness of the derivations. 

%Figure~\ref{fig:Coverage_Scenario3} compares the coverage probability in rural and urban areas for $100\%$ and $25\%$ load. It can be seen that the decrease in the load results in a less significant improvement in the performance of rural regions as the gNBs are more sparse in those regions. 
%------------------------
\begin{figure*}
         \begin{subfigure}[b]{0.48\textwidth}
         \centering
         \includegraphics[trim = 5mm 0mm 5mm 0mm, clip,width=\textwidth]{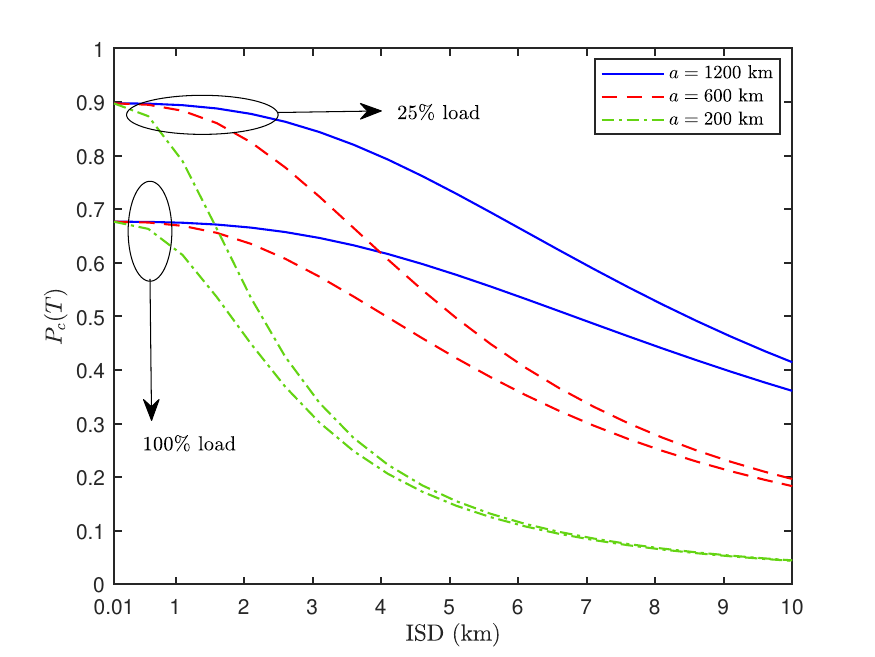}
         \caption{Coverage probability}
         \label{fig:coverage_ISD_Sc3}
     \end{subfigure}
     \hfill
     \begin{subfigure}[b]{0.48\textwidth}
         \centering
         \includegraphics[trim = 5mm 0mm 5mm 0mm, clip,width=\textwidth]{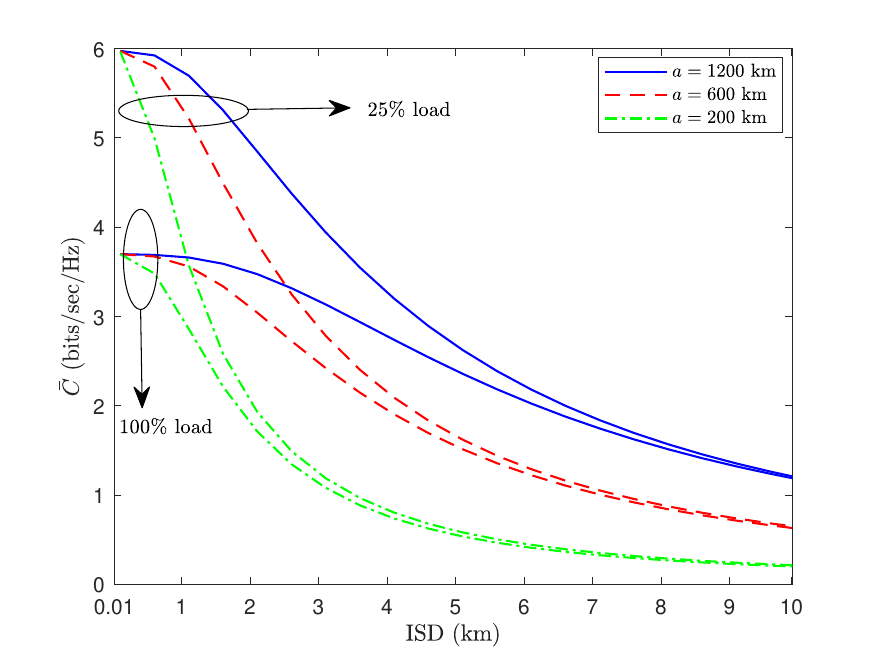}
         \caption{Data rate}
         \label{fig:Rate_ISD_Sc3}
     \end{subfigure}
     \hfill 
    \caption{Effect of ISD on Coverage probability (a) and data rate (b) of Case~I for different LEO satellite altitudes and TN load levels.}
   \label{fig:Coverage_Rate_ISD_SC3}
\end{figure*}
% %------------------------
%------------------------
\begin{figure*}
         \begin{subfigure}[b]{0.48\textwidth}
         \centering
         \includegraphics[trim = 5mm 0mm 5mm 0mm, clip,width=\textwidth]{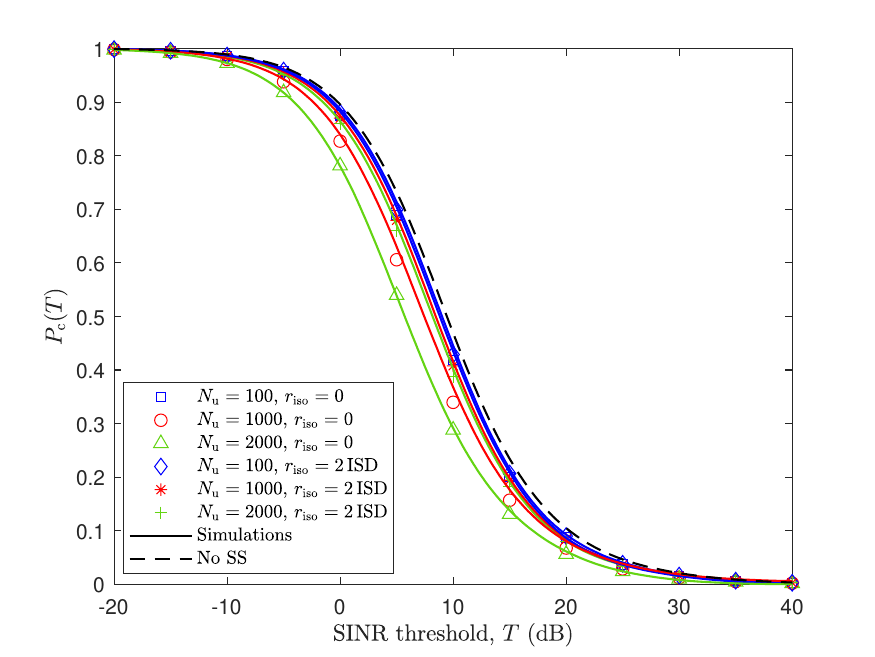}
         \caption{100\% load.}
         \label{fig:100load_sc5}
     \end{subfigure}
     \hfill
     \begin{subfigure}[b]{0.48\textwidth}
         \centering
         \includegraphics[trim = 5mm 0mm 5mm 0mm, clip,width=\textwidth]{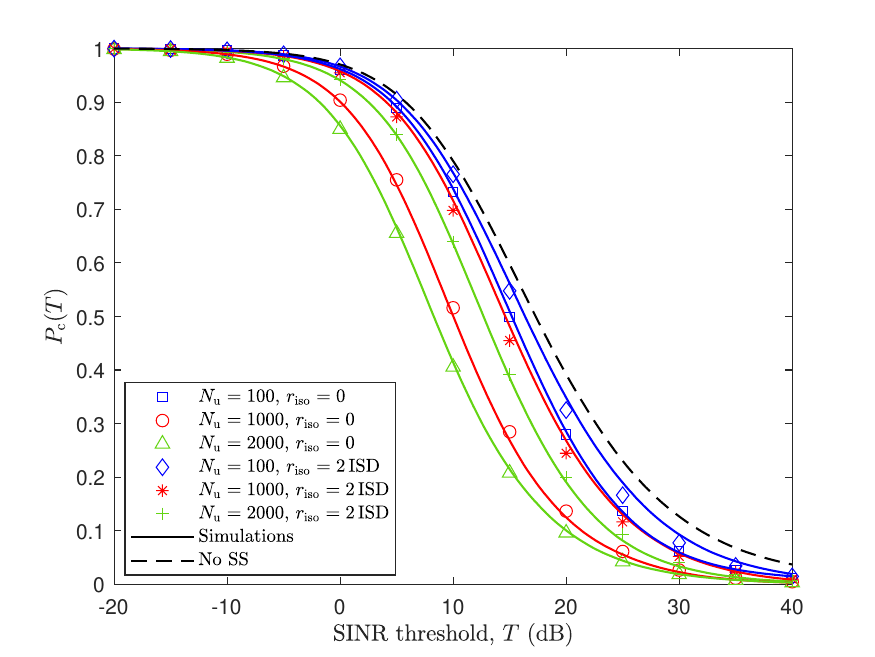}
         \caption{25\% load.}
         \label{fig:25load_sc5}
     \end{subfigure}
     \hfill 
    \caption{Coverage probability of Case~II for 100\% load (a), and 25\% load (b) for different number of NTN UEs and isolation distances. The simulations, shown by lines, verify the theoretical results, depicted by markers, given in Theorem~\ref{theorem:coverage senario5}.}
   \label{fig:Coverage_Scenario5}
\end{figure*}
% %------------------------
% %------------------------
% \begin{figure*}
%          \begin{subfigure}[b]{0.48\textwidth}
%          \centering
%          \includegraphics[trim = 5mm 0mm 5mm 0mm, clip,width=\textwidth]{}
%          \caption{Coverage probability}
%          \label{fig:coverage_riso_Sc5}
%      \end{subfigure}
%      \hfill
%      \begin{subfigure}[b]{0.48\textwidth}
%          \centering
%          \includegraphics[trim = 5mm 0mm 5mm 0mm, clip,width=\textwidth]{}
%          \caption{Data rate}
%          \label{fig:Rate_riso_Sc5}
%      \end{subfigure}
%      \hfill 
%     \caption{Coverage probability (a) and data rate (b) of Case~II versus isolation distance, $\riso$.}
%    \label{fig:Coverage_Rate_riso_SC5}
% \end{figure*}
% % %------------------------
%------------------------
\begin{figure}   
         \centering
         \includegraphics[trim = 5mm 0mm 5mm 0mm, clip,width=\textwidth]{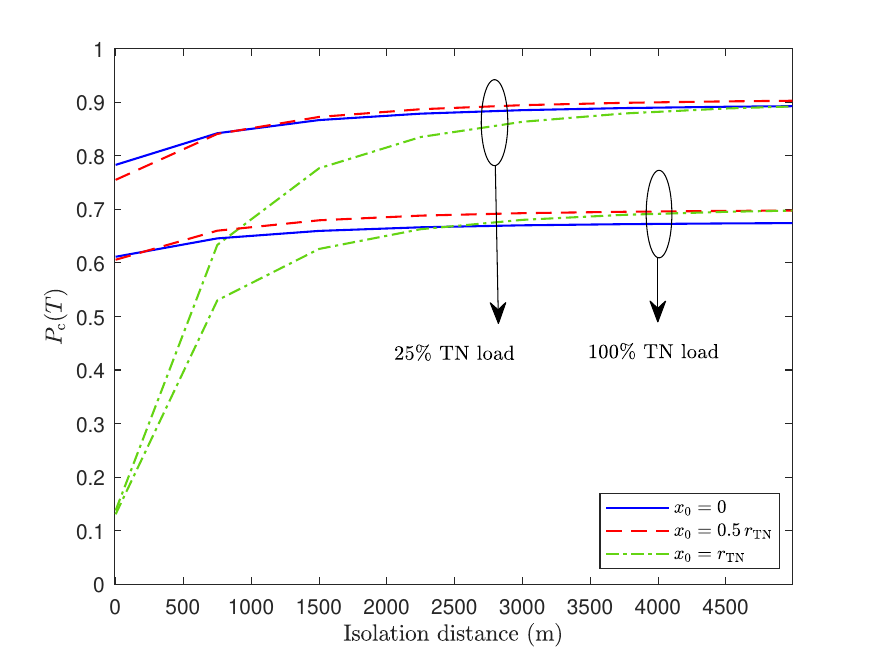}
         \caption{Effect of isolation distance, $\riso$, on coverage probability of Case~II for different TN UE's locations.}
         \label{fig:Coverage_Rate_riso_SC5}
\end{figure}
%------------------------

Figures~\ref{fig:Coverage_Scenario3}(a) and \ref{fig:Coverage_Scenario3}(b) verify the derivations in Theorem~\ref{theorem:coverage senario3} for $100\%$ and $25\%$ load from the TN network. The load level represents what percentage of BSs are active within the TN cluster. The black dotted lines represent the case of having no spectrum sharing between TN and NTN, i.e., the whole band is allocated only to TN. We can observe that the performance degradation due to sharing the spectrum is less significant in urban areas. The reason is that \acp{BS} are located in relatively closer proximity, i.e., smaller ISD, in urban areas, such that the interference received from TN \acp{BS} is much larger than the interference received from the LEO satellite, especially for $100\%$ load. On the other hand, for rural network, as the \acp{BS} are much farther from each other, NTN interference becomes more comparable to TN interference. Thus, there is more performance degradation for rural areas when sharing the spectrum between TN and NTN. Obviously, the impact of LEO satellite altitude is also more significant in rural regions.

Figures~\ref{fig:Coverage_Scenario3VSpt}(a) and \ref{fig:Coverage_Scenario3VSpt}(b) show the effect of power allocation between TN and NTN on coverage probability. The transmit power of LEO satellite is assumed to be fixed and equal to $\Pntn=46$~dBm. By increasing the TN power, the coverage probability saturates to a fixed value for which the NTN has no effect on its performance. The saturation value depends on the number of interfering TN BSs. As can be seen from the figure, a higher TN power is required for larger ISD distances, i.e., rural regions, as well as for lower satellite altitudes to maximize the coverage probability. 

The effect of ISD for different satellites altitudes and TN load levels on coverage probability and data rate are illustrated in Figs.~\ref{fig:Coverage_Rate_ISD_SC3}(a) and \ref{fig:Coverage_Rate_ISD_SC3}(b), respectively. Lower ISD provides better coverage and rate due to more strength of the serving signal. The performance drops with increasing the ISD. The decrease is more considerable for lower LEO altitudes due to higher interference power.

For Case~II, the transmit power of NTN UEs is assumed to be 200~mW with antenna gain of 1~dBi as specified in \cite{TR38.863}. Figures~\ref{fig:Coverage_Scenario5}(a) and \ref{fig:Coverage_Scenario5}(b) verify the derivations in Theorem~\ref{theorem:coverage senario5} for $100\%$ and $25\%$ load from the TN cell and varying the number of NTN UEs and isolation distances. The results show that the interference from TN DL is more significant compared to NTN UL, especially for higher loads. The effect of spectrum sharing is rather minimal for $100\%$ load from TN network, especially when isolation distance is greater than $2\isd$.

The effect of isolation distance between the TN and NTN networks is shown in Fig.~\ref{fig:Coverage_Rate_riso_SC5}, for $a=200$~km. As can be seen in the figure, increasing the isolation distance has no significant impact on the performance of users located at the center or middle of the TN cell. For edge users, a larger isolation distance is needed to achieve the highest possible performance. The minimum required isolation distance also decreases with increasing the TN load level since the NTN interference becomes less comparable to TN interference.

\section{Conclusions}
\label{sec:Conclusion}
In this paper, we studied the performance of integrated TN-NTN network when TN and NTN operate on the same frequency channel in S-band. Two co-existence cases were considered in which the TN users receive interference from either NTN DL or UL. Using tools from stochastic geometry, we obtained exact tractable expressions on the coverage probability and average data rate for each case in terms of TN and NTN networks' parameters.
The analysis and results provided herein will pave the way for optimal dynamic spectrum sharing between TN and NTN. The study showed that, depending on the network parameters, one co-existence case may outperform the other in terms of the probability of coverage and average achievable rate. Thus, the network may dynamically switch between the two cases to optimize the performance. It was also shown that there is a minimum isolation distance between TN and NTN which can minimize the effect of interference caused by spectrum sharing.

\appendix
\subsection{Proof of Theorem~\ref{theorem:coverage senario3}}
\label{Appen:proof coverage senario3}
Let us start derivation of Theorem~\ref{theorem:coverage senario3} using the definition of coverage probability given in \eqref{eq:coverage probability}:
\begin{align}
\label{eq:Coverage_probability_proof_part1}
\nonumber
&\Pc\left(T\right)=\Prob(\SINRTN > T )=\mathbb{E}_{R_0}\left[\Prob\left(\SINRTN>T |R_0=r_0\right)\right]\\\nonumber
&=\int_{0}^{u}\Prob\left(\SINRTN>T |R_0=r_0\right)f_{R_0}\left(r_0\right)dr_0\\\nonumber
&=\int_{0}^{u}\Prob\left(\frac{\Ptn \Gtn \Htn r_0^{-\atn}}{\Itn+\IntnI+\sigma^2}>T \right)f_{R_0}\left(r_0\right)dr_0\\\nonumber
&=\int_{0}^{u}\hspace{-2 mm}\mathbb{E}_{\Itn,\IntnI}\hspace{-1 mm}\left[\Prob\hspace{-1 mm}\left(\hspace{-1 mm}\Htn\hspace{-1 mm}>\hspace{-1 mm}\frac{T \left(\Itn+\IntnI+\sigma^2\right) r_0^{\atn}}{\Ptn \Gtn}\right)\hspace{-1 mm}\bigg|I_{\{\mathrm{TN,DL}\}}\hspace{-1 mm}>\hspace{-1 mm}0\right]\hspace{-1 mm}\\\nonumber
&\hspace{2 cm}\times f_{R_0}\left(r_0\right)dr_0\\\nonumber
&= \int_{0}^{u}\mathbb{E}_{\Itn,\IntnI}\left[1-F_{\Htn}\left(\frac{T \left(\Itn+\IntnI+\sigma^2\right) r_0^{\atn}}{\Ptn \Gtn}\right)\hspace{-1 mm}\right]\\\nonumber
&\hspace{2 cm}\times f_{R_0}\left(r_0\right)dr_0\\
%\end{align}
%\begin{align}
%\label{eq:Coverage_probability_proof_part2}
\nonumber
&\stackrel{(a)}= \hspace{-1 mm}\int_{0}^{u}\hspace{-1 mm}\mathbb{E}_{\Itn,\IntnI}\hspace{-1 mm}\left[\frac{\Gamma\left(m,m\frac{T \left(\Itn+\IntnII+\sigma^2\right) r_0^{\atn}}{\Ptn \Gtn}\right)}{\Gamma\left(m\right)}\right]\hspace{-1 mm} f_{\Rs}\left(\rs\right)\, \hspace{-1 mm}d\rs\\\nonumber
&\stackrel{(b)}= \int_{0}^{u} f_{\Rs}\left(\rs\right)e^{-\frac{m\,T \sigma^2 r_0^{\atn}}{\Ptn \Gtn}}\mathbb{E}_{\Itn,\IntnI}\Bigg[e^{-m\frac{T \left(\Itn+\IntnI\right) r_0^{\atn}}{\Ptn \Gtn}}\\\nonumber
&\sum_{k=0}^{m-1}\frac{\sum_{l=0}^{k}\binom{k}{l}\left(m\frac{T \sigma^2 r_0^{\atn}}{\Ptn \Gtn}\right)^l \left(m\frac{T \left(\Itn+\IntnI\right) r_0^{\atn}}{\Ptn \Gtn}\right)^{k-l}}{k!}\Bigg] d\rs\\
&\stackrel{(c)}= \int_{0}^{u} f_{\Rs}\left(\rs\right)\Bigg[e^{-s \sigma^2}\\\nonumber
&\sum_{k=0}^{m-1}\frac{\sum_{l=0}^{k}\binom{k}{l}\left(s \sigma^2\right)^l \left(-s\right)^{k-l}\hspace{-1 mm}\frac{\partial^{k-l}}{\partial s^{k-l}}\mathcal{L}_{\Itn}(s)\mathcal{L}_{\IntnI}(s)}{k!}\Bigg]_{s=s_0}\hspace{-3 mm} d\rs,
\end{align}
where $u=\rTN+x_0$, $F_{\Htn}(\cdot)$ is the CDF of $\Htn$, (a) follows from the distribution of gamma random variable $\Htn$ (being the square of the Nakagami random variable), (b) is calculated by applying the definition of incomplete gamma function for integer values of $m$ to (a), and (c) follows from independence of $\Itn$ and $\Intn$, i.e., $\mathbb{E}_{\Intn,\Itn}\left[e^{-s(\Intn+\Itn)}\right]=\mathbb{E}_{\Intn}\left[e^{-s\Intn}\right] \mathbb{E}_{\Itn}\left[e^{-s\Itn}\right]$.

\label{Appen:proof of Lemma 2}

% %--------------------------------------------------------------------------
\subsection{Proof of Lemma~\ref{lem:Laplace scenario 3}}
\label{Appen:proof Laplace scenario3}
Using the definition of the Laplace transform yields
\begin{align}
\label{eq:proof laplace scenario 3 }
\nonumber
&\mathcal{L}_{\Itn}(s)\triangleq\mathbb{E}_{\Itn}\left[e^{-s\Itn}\right]\\\nonumber
&=\mathbb{E}_{ R_n, \Htn}\left[\exp{\left(-s\sum_{n=1}^{\Nc}\Ptn \Gtn \Htn R_n^{-\atn}\right)}\right]\\\nonumber
&=\mathbb{E}_{R_n, \Htn}\left[\prod_{n=1}^{\Nc}\exp{\left(-s\Ptn \Gtn \Htn R_n^{-\atn}\right)}\right]\\\nonumber
&\stackrel{(a)}= \mathbb{E}_{ R_n}\left[\prod_{n=1}^{\Nc}\mathbb{E}_{\Htn }\left[\exp{\left(-s\Ptn \Gtn \Htn R_n^{-\atn}\right)}\right]\right]\\\nonumber
&\stackrel{(b)}= \prod_{n=1}^{\Nc} \int_{r_0}^{u}\mathbb{E}_{\Htn }\left[\exp{\left(-s\Ptn \Gtn \Htn r_n^{-\atn} \right)}\right]\\\nonumber
&\hspace{2 cm}\times f_{R_n|\Rs}(\rn|\rs)\,dr_n\\
&\stackrel{(c)}= \left( \int_{r_0}^{u}\mathcal{L}_{\Htn }\left(s\Ptn \Gtn r_n^{-\atn}\right)f_{R_n|\Rs}(\rn|\rs)\,dr_n\right)^{\Nc},
\end{align}
where $u=\rTN+x_0$, (a) follows from the i.i.d.\ distribution of $\Htn $ and its further independence from $R_n$,  (b) is obtained using the interfering distance distribution from \eqref{eq:Rn_R0_PDF}, and (c) is derived thanks to the i.i.d.\ distribution of conditional PDF of $f_{R_n|\Rs}(\rn|\rs)$ and the substitution from the definition of Laplace function. 
%%-------------------------------------------------------------------------
\subsection{Proof of Theorem~\ref{theorem:rate senario3}}
\label{Appen:proof rate scenario 3}
Based on the definition of the average data rate given in \eqref{eq:data rate def.}, we have
\begin{align}
\nonumber
&\bar{C}=\mathbb{E}_{\Itn,\IntnI,\Htn,\Rs}\left[\log_2\left(1+\SINRTN\right)\right]\\
&=\hspace{0 mm}\int_{0}^{u}\hspace{0 mm}\mathbb{E}_{\Itn,\IntnI,\Htn}\hspace{0 mm}\left[\log_2\hspace{0 mm}\left(1+\SINRTN\right)\right]\hspace{0 mm}f_{\Rs}(\rs)\,d\rs\\\nonumber
&\stackrel{(a)}=\hspace{-2 mm}\int_{0}^{u}\hspace{-2 mm}\int_0^{\infty}\hspace{-1 mm}\mathbb{E}_{\Itn,\IntnI,\Htn}\left[\Prob\left(\log_2\hspace{-1 mm}\left(\hspace{0 mm}1+\SINRTN\hspace{0 mm}\right)\hspace{-1 mm}>\hspace{-1 mm} t\right)\right] f_{\Rs}(\rs)\,dt d\rs,
% &=\hspace{-1 mm}\int_{0}^{u}\hspace{-2 mm}\int_0^{\infty}\hspace{-2 mm}\mathbb{E}_{\Itn,\Htn}\hspace{-1 mm}\left[\hspace{-0.5 mm}\Prob\hspace{-1 mm}\left(\hspace{-1 mm}\Htn\hspace{-1 mm}>\hspace{-1 mm}\frac{\rs^{\atn}\hspace{-1 mm}\left({\Itn+\IntnI+\sigma^2}\right)\hspace{-1 mm}\left(2^t-1\right)}{\Ptn \Gtn}\hspace{-1 mm}\right)\hspace{-1 mm}\right]\hspace{-1 mm}\\
% &\hspace{3 cm}\times f_{\Rs}(\rs)\,dt d\rs
\end{align}
where $u=\rTN+x_0$, $\SINRTN=\frac{\Ptn \Gtn \Htn r_0^{-\atn}}{\Itn+\IntnI+\sigma^2}$, and (a) is obtained from the fact that for a positive random variable $X$, $\mathbb{E}\left[X\right]=\int_{t>0}\Prob\left(X>t\right)dt$. Thus, we have
\begin{align}
&\bar{C}\stackrel{(a)}=\\\nonumber
&\hspace{-1 mm}\int_{0}^{u}\hspace{-2 mm}\int_0^{\infty}\hspace{-2 mm}\mathbb{E}_{\Itn,\IntnI}\hspace{0 mm}\left[1-F_{\Htn}\hspace{-1 mm}\left(\frac{\rs^{\atn}\left(\Itn+\IntnI+\sigma^2\right) \left(2^t-1\right)}{\Ptn \Gtn}\hspace{-1 mm}\right)\hspace{-1 mm}\right] \\\nonumber
&\hspace{3 cm}\times f_{\Rs}(\rs)\, dt d\rs\\\nonumber
&\stackrel{(b)}=\int_{0}^{u}\hspace{-2 mm}\int_0^{\infty}\hspace{-2 mm}\mathbb{E}_{\Itn,\IntnI}\left[\frac{\Gamma\left(m,m\frac{\rs^{\atn}\left(\Itn+\IntnI+\sigma^2 \right)\left(2^t-1\right)}{\Ptn \Gtn}\right)}{\Gamma\left(m\right)}\right]\\\nonumber
&\hspace{3 cm}\times f_{\Rs}(\rs)\, dt d\rs\\\nonumber
&\stackrel{(c)}=\hspace{-1 mm}\int_{0}^{u}\hspace{-2 mm}\int_0^{\infty}\hspace{-3 mm}f_{\Rs}\left(\rs\right)e^{-\frac{m\, \sigma^2 \left(2^t-1\right)r_0^{\atn}}{\Ptn \Gtn}}\mathbb{E}_{\Itn,\IntnI}\Bigg[e^{-m\frac{ \left(\IntnI+\Itn\right) r_0^{\atn}}{\Ptn \Gtn}}\\\nonumber
&\sum_{k=0}^{m-1}\frac{\sum_{l=0}^{k}\hspace{-1 mm}\binom{k}{l}\hspace{-1 mm}\left(\hspace{-1 mm}m\frac{ \sigma^2\left(2^t-1\right) r_0^{\atn}}{\Ptn \Gtn}\right)^l \hspace{-2 mm} \left(\hspace{-1 mm}m\frac{\left(\IntnI+\Itn\right) r_0^{\atn}}{\Ptn \Gtn}\right)^{k-l}\hspace{-1 mm}}{k!}\hspace{-1 mm}\Bigg] dt d\rs,
\end{align}
where (a) follows from the product distribution of two independent random variables, (b) follows from the gamma distribution of serving channel gain $\Htn$, and (c) is calculated by applying the incomplete gamma function for integer values of $m$ to (b).

% %--------------------------------------------------------------------------
\subsection{Proof of Lemma~\ref{lem:Laplace scenario 5}}
\label{Appen:proof Laplace scenario 5}
Using the definition of the Laplace transform yields
\begin{align}
\nonumber
&\mathcal{L}_{\IntnII}(s)\triangleq\mathbb{E}_{\IntnII}\left[e^{-s\IntnII}\right]\\\nonumber
&=\mathbb{E}_{ \Rntn, \Hntn}\left[\exp{\left(-s\sum_{n=1}^{\Nu}\Pntn \Gntn \Hntn \Rntn^{-\atn}\right)}\right]\\\nonumber
&=\mathbb{E}_{\Rntn, \Hntn}\left[\prod_{n=1}^{\Nu}\exp{\left(-s\Pntn \Gntn \Hntn \Rntn^{-\atn}\right)}\right]\\\nonumber
&\stackrel{(a)}= \mathbb{E}_{ \Rntn}\left[\prod_{n=1}^{\Nu}\mathbb{E}_{\Hntn }\left[\exp{\left(-s\Pntn \Gntn \Hntn \Rntn^{-\atn}\right)}\right]\right]\\\nonumber
&\stackrel{(b)}= \prod_{n=1}^{\Nu} \int_{w-x_0}^{\rntn+x_0}\hspace{-3 mm}\mathbb{E}_{\Hntn }\hspace{-1 mm}\left[\exp{\left(-s\Pntn \Gntn \Hntn {\mathcal{r}_n}^{-\atn} \right)}\right]\\\nonumber
&\hspace{2.5 cm}\times f_{\Rntn}(\mathcal{r}_n)\,d\mathcal{r}_n\\
&\stackrel{(c)}= \left( \int_{w-x_0}^{\rntn+x_0}\hspace{-3 mm}\mathcal{L}_{\Hntn }\left(s\Pntn \Gntn \mathcal{r}_n^{-\atn}\right)f_{\Rntn}(\mathcal{r}_n)\,d\mathcal{r}_n\right)^{\Nu},
\end{align}
where $w=\rTN+\riso$, (a) follows from the i.i.d.\ distribution of $\Hntn $ and its further independence from $\Rntn$,  (b) is obtained using the derivative of interfering distance CDF from \eqref{eq:CDF_Rn_ntn}, (c) is derived thanks to the i.i.d.\ distribution of PDF of $f_{\Rntn}(\mathcal{r}_n)$ and the substitution from the definition of Laplace function. 
%%-------------------------------------------------------------------------

%-----------------------------------------------------------------------
%\IEEEtriggeratref{24}
\bibliography{ref} 
\bibliographystyle{IEEEtran}
%-----------------------------------------------------------------------
% \begin{IEEEbiography}[{\includegraphics[width=1in,height=1.25in,clip,keepaspectratio]{bio_Okati_Niloofar.jpg}}]{Niloofar Okati}
% received the B.Sc.\ degree in electrical engineering from Shiraz University of Technology, Iran, in 2013, and M.Sc.\ degree in communications engineering from Iran University of Science and Technology (IUST), Iran, in 2016. She is currently a Ph.D.\ researcher at the Faculty of Information Technology and Communication Sciences, Tampere University, Finland. Her research interests include stochastic geometry, wireless communication and satellite networks. She was one of the 200 young researchers in mathematics and computer science selected worldwide to participate in the 7th Heidelberg Laureate Forum (HLF), in 2019. 
% \end{IEEEbiography}

%-----------------------------------------------------------------------
%\section*{Acknowledgment}

%\newpage

\end{document}